\newtheorem{theorem}{Theorem}
\theoremstyle{theorem}
\theoremstyle{definition}
\theoremstyle{plain}
\theoremstyle{plain}
\newtheorem{Remark}{Remark}[]
\newcommand{\cb}{{\bf{c}}}
\newcommand{\zerovec}{{\bf{0}}}
\newcommand{\Bb}{\mathbf{B}}
\newcommand{\Db}{\mathbf{D}}
\newcommand{\Hb}{\mathbf{H}}
\newcommand{\Ib}{\mathbf{I}}
\newcommand{\Lb}{\mathbf{L}}
\newcommand{\Pb}{\mathbf{P}}
\newcommand{\Ub}{\mathbf{U}}
\newcommand{\vb}{\mathbf{v}}
\newcommand{\Wb}{\mathbf{W}}
\newcommand{\xb}{\mathbf{x}}
\newcommand{\yb}{\mathbf{y}}
\newcommand{\eb}{\mathbf{e}}
\newcommand{\rb}{\mathbf{r}}
\newcommand{\hb}{\mathbf{h}}
\begin{document}
\begin{frontmatter}
\title{Robust Adaptive Generalized Correntropy-based Smoothed Graph Signal Recovery with a Kernel Width Learning}

%% or include affiliations in footnotes:
\author[mymainaddress]{Razieh~Torkamani}
\ead{hut.torkamani@gmail.com}

\author[mysecondaryaddress]{Hadi~Zayyani\corref{mycorrespondingauthor}}
\ead{zayyani@qut.ac.ir}
\cortext[mycorrespondingauthor]{Corresponding author}

\author[mythirdaddress]{Farokh~Marvasti}
\ead{marvasti@sharif.edu}

\address[mymainaddress]{Qom University of Technology (QUT), Qom, Iran}
\address[mysecondaryaddress]{Qom University of Technology (QUT), Qom, Iran}
\address[mythirdaddress]{Sharif University of Technology, Tehran, Iran}

\begin{abstract}
This paper proposes a robust adaptive algorithm for smooth graph signal recovery which is based on generalized correntropy. A proper cost function is defined, which takes the smoothness and generalized correntropy into account. The generalized correntropy used in this paper employs the generalized Gaussian density (GGD) function as the kernel. The proposed adaptive algorithm is derived and a kernel width learning-based version of the algorithm is suggested. The simulation results confirm the performance of the proposed algorithm for learning the kernel-width to the fixed correntropy kernel version of the algorithm. Moreover, some theoretical analysis of the proposed algorithm are provided. In this regard, firstly, the convexity analysis of the cost function is discussed. Secondly, the uniform stability of the algorithm is investigated. Thirdly, the mean convergence analysis is also added. Finally, the computational complexity analysis of the algorithm is incorporated. In addition, some synthetic and real-world experiments show the efficiency of the proposed algorithm in comparison to some other adaptive algorithms in the literature of adaptive graph signal recovery.
\end{abstract}

\begin{keyword}
Graph signal recovery\sep Generalized Correntropy\sep Robust\sep Non-Gaussian noise.
\end{keyword}

\end{frontmatter}

\section{Introduction}
\label{sec:Intro}
Graph Signal Processing (GSP) which deals with the processing of signals defined irregularly on a graph is nowadays has gained much interest. This filed of signal processing has many applications in biological, social, Internet of Things networks and image processing \cite{Shum13}, \cite{GSP18}. Many problems arises in GSP which Graph Signal Recovery (GSR) \cite{Loren18book}-\cite{Ahmad20}, graph sampling \cite{Nguy20}-\cite{Yang21}, topology learning \cite{Sega17}, and graph signal filtering \cite{Isuf17} to name a few.

In GSR, the entire graph signal should be recovered from a sampled, noisy subset of graph signal. To address this issue, various approaches have recently been developed for graph signal reconstruction, based on the inherent structural information exists in the graph signal. These algorithms are divided into two main categories which are non-adaptive \cite{Chen15}-\cite{Rami21} and adaptive methods \cite{Loren16}-\cite{Ahmad20}. In non-adaptive methods, the graph signal is reconstructed based on some properties such as bandlimitedness of graph signal in a single trial of the graph signal. On the other hand, the adaptive methods recover the graph signal based on streaming sequence of sampled graph signal and in an online and adaptive manner. Adaptive GSR methods have the main benefits of lower computational complexity, be online, and have the ability to adapt itself to nonstationary conditions of graph signal.

The pioneering work \cite{Loren16} generalized the Least Mean Square (LMS) adaptive filtering algorithm to the graph domain. In \cite{Loren17}, a distributed adaptive learning graph signal recovery algorithm is proposed. Moreover, \cite{Loren18} suggested a Recursive Least Square (RLS) algorithm for adaptive GSR. Also, in \cite{LorenCeci18}, LMS strategies are used for online time-varying graph signal recovery in the dynamic graphs. In addition, \cite{Shen19} proposed an online kernel-based graph signal recovery which has scalability and privacy property. In \cite{Spel20}, a Normalized LMS (NLMS) graph signal recovery algorithm is developed which converges faster than LMS and has less complexity than RLS algorithm. Besides, \cite{Loren19}  proposed a collaborative and distributed algorithm based on proximal gradient optimization approach for graph signal learning, tracking, and anomaly detection based on distributed subspace projections. Moreover, \cite{Ahmad20} proposed two modified LMS algorithms which have faster convergence than LMS while maintain the low computational complexity of LMS.

In this paper, the goal is to devise an adaptive graph signal recovery algorithm in presence of impulsive noise. The Gaussian noise assumption is violated in practical scenarios where there exist a heavy tailed impulsive noise. Most adaptive GSR algorithms in the literature \cite{Loren16}-\cite{Ahmad20} relies on the Gaussian assumption of the noise probability distribution. Hence, their performance are degraded in the impulsive environments. In contrast, \cite{Nguy20} considers impulsive noise assumption and suggest an adaptive Least Mean p'th Power (LMP) algorithm for graph signal recovery in alpha-stable noise. A denoising method for data corrupted simultaneously by impulsive noise and additive Gaussian noise was given in \cite{Tay21}, \cite{Jiang21}. In \cite{Tay21}, wireless sensor network is first modelled using an extended graph and a recursive graph median filter is developed that can be implemented with distributed processing. The filter is applied to the denoising of data that is subjected, simultaneously, to Gaussian noise and impulsive noise. In \cite{Jiang21}, a median filter is used that leverages the joint correlation, for denoising time-varying graph signals. The filters are implemented distributively using only information from immediate neighbours and are therefore suitable for resource limited sensor nodes. Both Gaussian noise and impulsive noise were considered.

In this paper, we adopt the well known concept of correntropy-based adaptive filtering \cite{Liu07}-\cite{Lv21} to combat the impulsive noise in GSR. In particular, propose a robust adaptive GSR algorithm which uses the generalized correntropy \cite{Chen16}. The uniform convergence condition of the proposed algorithm and the convergence of the mean of the algorithm is investigated. The simulation results show the superiority of the correntropy-based algorithm in comparison to others. Moreover, experiments on two impulsive noise models, verify the robustness of the proposed algorithm; and experiments with different types of signals, different sampling rates, and different impulsive noise models, show its adaptivity to the signal and noise conditions.

The organization of the paper is as follows. After introduction, sections ~\ref{sec:smooth} and ~\ref{sec:Gen_Corr} present the preliminaries needed in the paper. Then, section~\ref{sec: proposed} developes the proposed algorithm for adaptive graph signal recovery. Section~\ref{sec: theory} provides some theoretical analysis of the algorithm which consists of convexity analysis, uniform stability analysis, mean performance analysis, and complexity analysis of the proposed algorithm. In Section~\ref{sec: simulation}, the simulation experiments are presented. Finally, conclusions are drawn in Section~\ref{sec: conclusion}.

\section{Smoothed graph signal recovery model}
\label{sec:smooth}
%\label{sec:ProblemForm}
Let $\mathcal{G}=\{\mathcal{V},\mathcal{E}\}$ defines an undirected graph consisting of a set of $N$ nodes ${\cal V}=\{1,2,....N\}$. A set edges ${\cal E}=\{W_{ij}\}_{i,j\in{\cal V}}$ is defined such that $W_{ij}>0$ if there is a link between nodes $i$ and $W_{ij}=0$ otherwise. The adjacency matrix $\Wb$ is an $N \times N$ symmetric matrix and consists the collection of all the weights $W_{ij} (i,j=1,2,...,N)$. Suppose $D_i=\sum_{j=1}^{N}{W_{ij}}$ denote the degree of node $i$; then the degree matrix of a graph is defined as a diagonal matrix $\Db=\mathrm{diag}(D_i)$. The Laplacian matrix is also defined as $\Lb =\Db-\Wb$.

A graph signal is shown as $\xb\in {\mathbb R}^{N\times 1}$, where $N$ is the number of nodes. Given the signals of $M$ vertices, the signal at time index $n$ can be written as
\begin{equation}
\label{eq: model}
\yb[n]=\Phi[n]\xb+\vb[n]
\end{equation}
where $\vb$ is the is the $N\times 1$ noise vector which no assumption is made for it throughout this paper, and $\Phi[n]\in\mathbf{R}^{N\times N}$ is the diagonal sampling matrix at time $n$ in which the $i'$th diagonal element is ``one'' if and only if the $i'$th node has been sampled, otherwise this element is ''zero'' and this row takes the value of all-zero.

The aim of GSR is to reconstruct the graph signal $\xb$ from the sampled observations $\yb[n]$. To this end, it is necessary to assume some prior knowledge about the signals $\xb$. On the other hand, most of the real-world graph signals are smooth on the underlying graph, such that the value of neighbor nodes tend to be more analogous than those of distant nodes. In this paper, we exploit the smoothness assumption of deterministic graph signals in the recovery process, which can be characterized by a quadratic form as
\begin{equation}
S(\xb)=\lVert \Lb^{\frac{1}{2}}\xb\rVert=\xb^T\Lb\xb,
\end{equation}
which measures the variations of signal samples over the graph, and can be rewritten as
\begin{equation}
S(\xb)=\sum_{(i,j)\in {\cal E}}{W_{ij}(x_j-x_i)^2},
\end{equation}
where $x_i$ is the $i'$th element of $\xb$. The smaller the value of function $S(\xb)$, the smoother the graph signal $\xb$.

\section{Generalized correntropy}
\label{sec:Gen_Corr}
Let $x$ and $y$ be two random variables with $F_{xy}(x,y)$ as the joint probability distribution function. Then, the definition of the correntropy is \cite{Chen16}
\begin{equation}
V(x,y)=\mathbb{E}\{\kappa(x,y)\}=\int{\kappa(x,y)dF_{xy}(x,y)},
\end{equation}
where $\mathbb{E}$ stands for the expectation operator, and $\kappa(x,y)$ is a kernel function. A usual kernel used in the correntropy is the Gaussian kernel:
\begin{equation}
\kappa(x,y)=\frac{1}{\sqrt{2\pi}\sigma}\exp(-\frac{e^2}{2\sigma^2})=\frac{1}{\sqrt{2\pi}\sigma}\exp(-\rho e^2),
\end{equation}
where $e=x-y$, $\sigma>0$ is the kernel width, and $\rho=1/2\sigma^2$. Here, we use generalized Gaussian density function (GGD) with zero-mean as the kernel function of the correntropy \cite{Chen16}:
\begin{equation}
\label{eq_kernel}
G_{\alpha,\beta}(e)=\frac{\alpha}{2\beta\Gamma(1/\alpha)}\exp(-|\frac{e}{\beta}|^\alpha)=z_{\alpha,\beta}\exp(-\rho|e|^\alpha)
\end{equation}
where $\alpha>0$ and $\beta>0$ are the shape and width (scale) parameters, respectively, $\Gamma(.)$ is the gamma function, $\rho=1/\beta^\alpha$, and $z_{\alpha,\beta}=\alpha/(2\beta\Gamma(1/\alpha))$ denotes the normalization constant. Using this GGD density function, and defining
\begin{equation}
V(x,y)=\mathbb{E}(G_{\alpha,\beta}(e))
\end{equation}
leads to generalized correntropy. As it is evident, substituting $\alpha=2$ leads to the correntropy with the Gaussian kernel. In practice, usually the joint pdf of $x$ and $y$ is not known, and only a limited number of samples $\{(x_i,y_i)\}_{i=1}^N$ are available. Thus, an estimate of generalized correntropy is obtained directly from samples as
\begin{equation}
\hat{V}(x,y)=\frac{1}{N}\sum_{i=1}^{N}{(G_{\alpha,\beta}(x_i-y_i))}=\frac{1}{N}\sum_{i=1}^{N}{(G_{\alpha,\beta}(e_i))}
\end{equation}
which we use in the current work, and we call it total generalized correntropy (TGC). In \cite{Chen16}, the generalized correntropy has been used as an estimation cost in signal recovery and a metric called the generalized correntropic loss (GC-loss) function is defined as
\begin{align}
J_{GC-loss}(x,y)&=G_{\alpha,\beta}(0)-V_{\alpha,\beta}(x,y)=z_{\alpha,\beta}-\mathbb{E}[G_{\alpha,\beta}(e_i)] \nonumber\\
&=z_{\alpha,\beta}\{1-\mathbb{E}[\exp(-\rho|e_i|^\alpha)]\}
\end{align}
where we have assumed the same kernel width for all samples, for simplicity. In the current work, we use an estimator of GC-loss function, called total GC-loss (TGC-loss) function as
\begin{align}
J_{TGC-loss}(x,y)&=G_{\alpha,\beta}(0)-\hat{V}_{\alpha,\beta}(x,y)=z_{\alpha,\beta}-\frac{1}{N}\sum_{i=1}^{N}{[G_{\alpha,\beta}(e_i)]} \nonumber\\
&=z_{\alpha,\beta}\{1-\frac{1}{N}\sum_{i=1}^{N}[\exp(-\rho|e_i|^\alpha)]\}
\end{align}

Thus, the unknown signal $x$ can be reconstructed from the measurements $y$ by minimizing the above TGC-loss function.

\section{The proposed adaptive algorithm}
\label{sec: proposed}
\subsection{TGC-loss based smoothed graph signal recovery}
Jointly applying the correntropy introduced in section \ref{sec:Gen_Corr} and smoothness of signal on graph presented in section \ref{sec:smooth}, the problem of reconstructing a smooth graph signal $\xb$ from sampled, noisy observations $\yb[n]$ can be defined as minimization of the following cost function
\begin{equation}
\label{eq:optimiz}
f(\xb)=\frac{1}{2}\xb^T\Lb\xb+\frac{\gamma}{2}z_{\alpha,\beta}\{1-\frac{1}{N}\sum_{i=1}^{N}[\exp(-\rho|e_i|^\alpha)]\},
\end{equation}
where $e_i=y_i-\Phi_{i,i} x_i$ denotes the $i$'th element of the error vector $\eb$, $\Phi_{i,i}$ is the $i'$th diagonal element of the $\Phi$, $\gamma$ is the regularization parameter, and $z_{\alpha,\beta}$ and $\rho$ are defined similar to (\ref{eq_kernel}). The first term in the above objective function encourages the smoothness of the reconstructed graph signal, and minimizing the second term is equivalent to maximizing the generalized correntropy, which maximizes the similarity between the observed and estimated graph signals.

\subsection{Adaptive algorithm}
\label{sec: prop}
In this section we exploit the stochastic gradient descent algorithm for the reconstruction of graph signal, which is a robust algorithm that has simple implementation and low computational cost, and, hence, is the most popular algorithm among the adaptive methods. The recovery process consists of estimating the smooth signal $\xb$ from sampled, noisy observations $\yb[n]$ (see \eqref{eq: model}) by solving the optimization problem (\ref{eq:optimiz}), which induces smoothness of the graph signal $\xb$ and maximizes the generalized correntropy. In order to distinguish between the original signal and the signal at any time instance, from now on, we show the original signal with $\xb^o$. The iterative stochastic gradient descent algorithm can be written as
\begin{equation}
\xb[n+1]=\xb[n]-\xi \nabla f(\xb[n])
\end{equation}
where $n$ is the time index, $\xi$ is the step-size parameter, and the gradient of function $f(\xb[n])$ is
\begin{equation}
\label{eq_nabla1}
\nabla f(\xb[n])=\Lb \xb[n]+\frac{\gamma\rho\alpha z_{\alpha,\beta}}{2}\Phi g(\eb[n])
\end{equation}
where
\begin{equation}
\label{eq_ge1}
g(\eb[n])=\frac{1}{N}\sum_{i=1}^N{(\exp(-\rho|e_i[n]|^\alpha)|e_i[n]|^{\alpha-1}sign(e_i[n])\hb_i)}
\end{equation}
and $\hb_i$ is an $N \times 1$ vector, whose all elements are zero except its $i'$th element, which is equal to one. Consequently, the gradient descent recursion of the algorithm is
\begin{equation}
\label{eq_LMS}
\xb[n+1]=\xb[n]-\xi\Lb \xb[n]-\frac{\xi\gamma\rho\alpha z_{\alpha,\beta}}{2}\Phi g(\eb[n])
\end{equation}

The proposed GC-GSR algorithm is given in Alg. \ref{Algorithm_1}.

\begin{algorithm}[!b]

\caption{The proposed GC-GSR Algorithm.}
\label{Algorithm_1}
\textbf{Input}   $\Lb,\quad \Phi,\quad \yb,\quad \rho,\quad\alpha, \quad\beta,\quad \xi,\quad\gamma $. \newline
\textbf{Initialize} $\xb[1]=\zerovec$, $\eb[1]=\zerovec$, $n=1$.

\begin{algorithmic}[1]
\REPEAT
\STATE
 {Find $\beta_i$ using the procedure in Alg. \ref{Algorithm_2} or use the given $\beta$};
\STATE
 {Obtain $g(\eb[n])$ by solving \eqref{eq_ge1}};
\STATE
{Compute $z_{\alpha, \beta}$ using \eqref{eq_kernel}};
\STATE
 {Update $\xb[n+1]$ using \eqref{eq_LMS}};
\STATE $n \leftarrow n+1$;
\UNTIL {a stopping criterion is reached.}
\end{algorithmic}
\end{algorithm}

\begin{Remark}
It seems that the proposed methods are rather direct applications of regularised empirical risk minimization, with total variation of graph signals as regularizer and a family of loss functions that is tailored to a specific noise distribution.
\end{Remark}

As a future work, one can put this work into context of the regularizated empirical risk minimization and its special case generalised total variation minimization, see \cite{Jung_book}-\cite{SarcheshmehPour21}. Moreover, one can put this work into context a line of work total variation minimisation methods Motivated by a clustered signal; see \cite{Jung18}- \cite{Jung20}, which, in contrast to our approach, use a non-smooth variant of total variation. This non-smooth variant
requires less sampled data points (somewhat similar to compressed sensing methods that use ell1 norm for regularization instead of $\ell 2$ norm squared).

While \cite{Jung18}- \cite{Jung20} provide error bounds without "any" assumption on the probability distribution of noise, \cite{Jung191} uses a Gaussian noise model to derive high-probability upper bounds on the estimation error.

\subsection{Kernel width estimation}
\label{sec:Kernel_est}
One of the parameters in the kernel function of the generalized correntropy is the scale parameter, which controls the width of the kernel function, and directly relates to the steady-state performance, convergence rate, and impulsive noise rejection. On the other hand, it is a free parameter that is often selected by the user. The value of the kernel width depend on the application properties, and, hence, determining an optimal value for it is important. In this paper, we have introduced a novel algorithm to address this issue, in which we assume a prior distribution on the (inverse of) the kernel width, and, then, learn its maximum a posteriori (MAP) estimate in an iterative procedure using the expectation maximization (EM) algorithm. The prior distribution must be chosen in such a way that it is a conjugate to the GGD distribution used for generalized correntropy, in order to obtain the posterior distribution in a closed form. Let $\theta_i=1/\beta_i$, then we assume a generalized Gamma distribution (G$\Gamma$D) with hyperparameters $a_0$, $d_0$, and $p_0$ over $\theta_i$ as
\begin{equation}
p(\theta_i)=G\Gamma D(a_0,d_0,p_0)=\frac{p_0/d_0^{a_0}}{\Gamma(d_0/p_0)}\theta_i^{d_0-1}\exp\left(-\left(\frac{\theta_i}{d_0}\right)^{a_0}\right)
\end{equation}

Thus, we can write
\begin{equation}
{\cal L}(\theta_i;x_i[n])\propto G_{\alpha,\beta}(e_i[n]).p(\theta_i)
\end{equation}
where ${\cal L}$ is the marginal likelihood of the observed data. Define $Q(\theta_i|\theta_i[n])$ as the expected value of the log-likelihood function of $\theta_i$, given $y_i[n]$ and the current estimate of the parameter ($\theta_i[n]$, where $n$ is the time index); then we can write the expectation step as

E-step:
\begin{align}
Q(\theta_i|\theta_i[n])&=\left<\ln({\cal L}(\theta_i;x_i[n],y_i[n]))\right>\nonumber\\
&=\left<\ln(G_{\alpha,\beta}(e_i[n]))+\ln(p(\theta_i))\right>+\mathrm{const.}\nonumber\\
&=\ln(\theta_i)-\theta_i^\alpha|e_i[n]|^\alpha+(d_0-1)\ln(\theta_i)-a_0^{-p_0}\theta_i^{p_0}+\mathrm{const.}
\end{align}

Substituting $p_0=\alpha$, we have
\begin{equation}
Q(\theta_i|\theta_i[n])=d_0\ln(\theta_i)-\left(|e_i[n]|^\alpha+a_0^{-\alpha}\right)\theta_i^\alpha
\end{equation}

In maximization step, we search for the parameter $\theta_i$ that maximize the above quantity:

M-step:
\begin{equation}
\theta_i[n+1]=\mathop{\mathrm{argmin}}_{\theta_i}Q(\theta_i|\theta_i[n])
\end{equation}
which leads to the following update equations for hyperparameters
\begin{align}
\label{eq_da}
&d[n+1]=d_0+1\nonumber\\
&a_i[n+1]=\left(\frac{1}{|e_i[n]|^\alpha+a_0^{-\alpha}}\right)^{\frac{1}{\alpha}}
\end{align}

\begin{algorithm}[!t]
\caption{Learning kernel width Algorithm.}
\label{Algorithm_2}
\textbf{Input}   $\eb, \alpha$. \newline
\textbf{Initialize} $d_0=1\times 10^{-6}$, $a_0=1\times 10^{-6}$.
\begin{algorithmic}[1]
\STATE Update $d$ and $a$ using \eqref{eq_da}
\STATE
 {Find $\theta_i$ using \eqref{eq_theta}}
\STATE
 {Set $\beta_i=1/\theta_i$}
\end{algorithmic}
\end{algorithm}

Finally, the update equation for the kernel width is
\begin{equation}
\label{eq_theta}
\theta_i[n+1]=a_i[n+1]\frac{\Gamma\left(\frac{d[n+1]+1}{\alpha}\right)}{\Gamma\left(\frac{d[n+1]}{\alpha}\right)}
\end{equation}

The resulting algorithm is summarized in the Alg. \ref{Algorithm_2}.

\section{Theoretical analysis}
\label{sec: theory}

\subsection{Convexity analysis}
In this subsection, we study the convexity of the cost function (\ref{eq:optimiz}). It has been proved in the literature that the first term of the cost function, which takes into account the smoothness of the signal, is convex. Therefore, we only discuss the second term, which is related to the correntropy, and we call it $\hat{f}$, i.e.,
\begin{equation}
\hat{f}(\eb[n])=z_{\alpha,\beta}\{1-\frac{1}{N}\sum_{i=1}^{N}[\exp(-\rho|e_i[n]|^\alpha)]\}
\end{equation}

\begin{theorem}
Let $\eb[n]=(e_1[n],...,e_N[n])^T$. Thus, we have

$i)$ if $0<\alpha<1$, then the cost function \eqref{eq:optimiz} is concave at any $\eb$ with $e_i\neq 0,\quad \forall i=1:N$;

$ii)$ if $\alpha>1$, then the cost function \eqref{eq:optimiz} is convex at any $\eb$ with $e_i\neq 0$ and $|e_i[n]|<(\frac{\alpha-1}{\rho\alpha})^\frac{1}{\alpha},\quad \forall i=1:N$;

$iii)$ if $\rho\longrightarrow 0$, then, for $0<\alpha<1$ the cost function \eqref{eq:optimiz} is concave at any $\eb$ with $e_i\neq 0$, and for $\alpha>1$, the cost function \eqref{eq:optimiz} is convex at any $\eb$ with $e_i\neq 0$.
\end{theorem}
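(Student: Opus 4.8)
The plan is to reduce everything to a one–dimensional computation. Since the first term $\tfrac12\xb^T\Lb\xb$ of \eqref{eq:optimiz} is already known to be convex and the map $\xb\mapsto\eb$ given by $e_i=y_i-\Phi_{i,i}x_i$ is affine, it suffices to analyze the curvature of the correntropy part
\[
\hat f(\eb)=z_{\alpha,\beta}\Bigl(1-\tfrac1N\sum\nolimits_{i=1}^N \exp(-\rho|e_i|^\alpha)\Bigr)
\]
as a function of $\eb\in\Real^N$. The key structural observation is that $\hat f$ is \emph{separable} in the coordinates $e_i$, so its Hessian $\nabla^2_{\eb}\hat f$ is diagonal; hence its definiteness is decided entirely by the signs of the scalar second derivatives $\partial^2\hat f/\partial e_i^2$, and I only have to sign one such term.

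Next I would carry out that computation. For $e_i\neq0$ the map $e_i\mapsto|e_i|^\alpha$ is twice differentiable with $\tfrac{d}{de_i}|e_i|^\alpha=\alpha|e_i|^{\alpha-1}\mathrm{sign}(e_i)$, giving
\[
\frac{\partial\hat f}{\partial e_i}=\frac{z_{\alpha,\beta}\rho\alpha}{N}\,|e_i|^{\alpha-1}\mathrm{sign}(e_i)\,e^{-\rho|e_i|^\alpha},
\]
and one more differentiation (performed for $e_i>0$ and extended by symmetry, since this quantity is even in $e_i$) yields
\[
\frac{\partial^2\hat f}{\partial e_i^2}=\frac{z_{\alpha,\beta}\rho\alpha}{N}\,|e_i|^{\alpha-2}e^{-\rho|e_i|^\alpha}\bigl[(\alpha-1)-\rho\alpha|e_i|^\alpha\bigr].
\]
Because $z_{\alpha,\beta}$, $\rho$, $\alpha$, $|e_i|^{\alpha-2}$ and the exponential are strictly positive, the sign of this diagonal entry is exactly the sign of the bracket $B(e_i):=(\alpha-1)-\rho\alpha|e_i|^\alpha$.

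The three assertions then fall out by inspecting $B$. In case $(i)$, if $0<\alpha<1$ then $\alpha-1<0$ and $\rho\alpha|e_i|^\alpha\ge0$, so $B(e_i)<0$ for every $e_i\neq0$; the diagonal Hessian is negative definite at each such $\eb$, i.e.\ $\hat f$ is concave there. In case $(ii)$, if $\alpha>1$ then $B(e_i)>0$ exactly when $|e_i|^\alpha<(\alpha-1)/(\rho\alpha)$, i.e.\ $|e_i|<\bigl((\alpha-1)/(\rho\alpha)\bigr)^{1/\alpha}$; on that region all diagonal entries are positive, so $\hat f$ is convex, and adding back the convex quadratic shows the full cost \eqref{eq:optimiz} is convex. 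In case $(iii)$, letting $\rho\to0$ drives $\rho\alpha|e_i|^\alpha\to0$ for each fixed $e_i\neq0$ (equivalently, the threshold in $(ii)$ diverges), so $\mathrm{sign}\,B(e_i)=\mathrm{sign}(\alpha-1)$, which recovers concavity for $0<\alpha<1$ and convexity for $\alpha>1$ at every $\eb$ with all $e_i\neq0$.

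The only delicate point — and the step I would treat most carefully — is the exclusion of $e_i=0$: for $0<\alpha<1$ the function $|e_i|^\alpha$ is not differentiable there, and for $1<\alpha<2$ the factor $|e_i|^{\alpha-2}$ is unbounded near $0$, so the second–order argument genuinely applies only on $\{\eb:\,e_i\neq0\ \forall i\}$, which is precisely the hypothesis in the statement. I would also make explicit why curvature in $\eb$ transfers to the claimed property of $f$ in $\xb$: since each $e_i$ depends affinely on $x_i$ (and the unsampled coordinates, where $\Phi_{i,i}=0$, drop out of $\hat f$ entirely), convexity/concavity of $\hat f$ in $\eb$ carries over to $f$ via the chain rule, with the convex term $\tfrac12\xb^T\Lb\xb$ only reinforcing convexity in cases $(ii)$ and $(iii)$.
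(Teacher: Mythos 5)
Your proposal is correct and follows essentially the same route as the paper: you compute the diagonal Hessian of the correntropy term with respect to $\eb$, obtaining entries proportional to $|e_i|^{\alpha-2}e^{-\rho|e_i|^\alpha}\bigl[(\alpha-1)-\rho\alpha|e_i|^\alpha\bigr]$, and read off cases $(i)$--$(iii)$ from the sign of the bracket, exactly as in the paper's proof (which calls the bracket $T(e_i)$). Your added remarks on excluding $e_i=0$ and on transferring curvature from $\eb$ to $\xb$ through the affine map are welcome refinements but do not change the argument.
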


\begin{proof}
Let $\Hb(\eb)$ denote the Hessian of the $\hat{f}$ with respect to $\eb[n]$. Thus, we have
\begin{equation*}
\label{eq_nabla2}
h_{i,j} =
\begin{cases}
\rho z_{\alpha,\beta}\alpha T(e_i)|e_i[n]|^{\alpha-2}\exp(-\rho|e_i[n]|^\alpha) & \text{if } i=j,\\
0 & \text{if } i\neq j
\end{cases}
\end{equation*}
where $h_{i,j}$ is the $(i,j)$'th element of the Hessian matrix $\Hb(\eb)$, and $T(v)=\alpha-1-\rho \alpha |v|^\alpha$. Thus, from the above equation, one can see that the sufficient condition for the convexity of $\hat{f}$ is
\begin{equation}
\alpha-1-\rho\alpha|e_i[n]|^\alpha>0,\forall i=1:N
\end{equation}
which leads to
\begin{equation}
\label{eq: cond1}
(\frac{\alpha-1}{\rho\alpha})^\frac{1}{\alpha}>|e_i[n]|,\forall i=1:N.
\end{equation}
Therefore, another sufficient condition for convexity of the cost function is
\begin{equation}
(\frac{\alpha-1}{\rho\alpha})^\frac{1}{\alpha}>|e|_{\mathrm{max}},
\end{equation}
where $|e|_{\mathrm{max}}$ is the maximum possible value of $|e_i[n]|$ which usually occurs at the initialization step of the algorithm. It can be see that if $0<\alpha<1$, then we have $\Hb(\eb)<\zerovec$ at any $\eb$ with $e_i\neq 0,\quad \forall i=1:N$. Moreover, if $\alpha>1$, then $\Hb(\eb)>\zerovec$ at any $\eb$ with $e_i\neq 0$ and $|e_i[n]|<(\frac{\alpha-1}{\rho\alpha})^\frac{1}{\alpha},\quad \forall i=1:N$. On the other hand, when $\rho\longrightarrow 0$, thus, for $0<\alpha<1$, we have $\Hb(\eb)<\zerovec$ at any $\eb$ with $e_i\neq 0$ and for $\alpha>1$, we have $\Hb(\eb)>\zerovec$ at any $\eb$ with $e_i\neq 0$.
\end{proof}

\subsection{Uniform stability performance}
In this subsection, we use uniform decreasing condition of the cost function to analyze the uniform stability performance of the proposed GSR algorithm, which is
\begin{equation}
\Delta f=f(\xb[n+1])-f(\xb[n])<0.
\end{equation}
The motivation for considering this uniform decrease in the cost function arises from a theorem in mathematical analysis which states that if a function of a sequence is decreasing and the function is lower bounded (for example be positive in our case), then the sequence converges \cite{MA02}.

According to the cost function defined by equation (\ref{eq:optimiz}), we have
\begin{align}
\Delta f=&\frac{1}{2}\xb^T[n+1]\Lb\xb[n+1]-\frac{1}{2}\xb^T[n]\Lb\xb_[n]\nonumber\\
&+\frac{\gamma z_{\alpha,\beta}}{2N}\sum_{i=1}^{N}[\exp(-\rho|e_i[n]|^\alpha)-\exp(-\rho|e_i[n+1]|^\alpha)]
\end{align}
which, using equation (\ref{eq_LMS}) can be approximated as
\begin{align}
\label{eq_nabla3}
\Delta f=&\frac{1}{2}[\left((\Ib-\xi\Lb)\xb[n]-\frac{\xi\rho\alpha\gamma z_{\alpha,\beta}}{2}\Phi g(\eb[n])\right)^T\Lb\nonumber\\
&\times\left((\Ib-\xi\Lb)\xb[n]-\frac{\xi\rho\alpha\gamma z_{\alpha,\beta}}{2}\Phi g(\eb[n])\right)-\xb^T[n]\Lb\xb[n]]\nonumber\\
&+\frac{\gamma z_{\alpha,\beta}}{2N}\sum_{i=1}^{N}{\exp(-\rho|e_i[n]|^\alpha)\left(1-\exp\left(-\rho\left(|e_i[n+1]|^\alpha-|e_i[n]|^\alpha\right)\right)\right)}\nonumber\\
\cong&\frac{\xi^2}{2}(\Lb\xb[n]+\frac{\rho\alpha\gamma z_{\alpha,\beta}}{2}\Phi g(\eb[n]))^T\Lb(\Lb\xb[n]+\frac{\rho\alpha\gamma z_{\alpha,\beta}}{2}\Phi g(\eb[n])) \nonumber\\
&-\xi(\Lb \xb[n]+\frac{\rho\alpha\gamma z_{\alpha,\beta}}{2}\Phi g(\eb[n]))^T\Lb \xb[n] \nonumber\\
&-\frac{\rho\gamma z_{\alpha,\beta}}{2N}\sum_{i=1}^{N}{\exp(-\rho|e_i[n]|^\alpha)\left(|e_i[n+1]|^\alpha-|e_i[n]|^\alpha\right)}
\end{align}

The last term at the summation of the above equation can be rewritten as
\begin{align}
|e_i[n+1]|^\alpha-|e_i[n]|^\alpha&=|e_i[n]+\Phi_i(\xb[n]-\xb[n+1])|^\alpha-|e_i[n]|^\alpha \nonumber\\
&=|e_i[n]+\Phi_i \tilde{\eb}[n]|^\alpha-|e_i[n]|^\alpha\nonumber\\
&\cong  |e_i[n]|^\alpha+\Delta P_i-|e_i[n]|^\alpha=\Delta P_i
\end{align}
where $\tilde{\eb}[n]=\xb[n]-\xb[n+1]$, and we have
\begin{align}
\Delta P_i=&|e_i[n]+\Phi_i \tilde{\eb}[n]|^\alpha-|e_i[n]|^\alpha\cong|e_i[n]|^\alpha\left(1+\alpha\frac{\Phi_i \tilde{\eb}[n]}{e_i[n]}\right)\nonumber\\
&-|e_i[n]|^\alpha\cong\alpha|e_i[n]|^{\alpha-1}\Phi_i \tilde{\eb}[n]
\end{align}

On the other hand, from (\ref{eq_LMS}) we have $\tilde{\eb}[n]=\xi(\Lb\xb[n]+\frac{\rho\alpha\gamma z_{\alpha,\beta}}{2}\Phi g(\eb[n]))$, and $\Phi_i=\hb_i^T\Phi^T$, thus, the summation term at the last line of (\ref{eq_nabla3}) can be rewritten as
\begin{align}
\frac{\rho\gamma z_{\alpha,\beta}}{2N}&\sum_{i=1}^{N}{\exp(-\rho|e_i[n]|^\alpha)\left(|e_i[n+1]|^\alpha-|e_i[n]|^\alpha\right)}\nonumber\\
=&\xi\frac{\alpha\rho\gamma z_{\alpha,\beta}}{2N}\sum_{i=1}^{N}\left({\exp(-\rho|e_i[n]|^\alpha)|e_i[n]|^{\alpha-1}\hb_i^T\Phi^T}\right)\nonumber\\
&\times(\Lb\xb[n]+\frac{\rho\alpha\gamma z_{\alpha,\beta}}{2}\Phi g(\eb[n]))\nonumber\\
=&\frac{\xi\alpha\rho\gamma z_{\alpha,\beta}}{2}(\Lb\xb[n]+\frac{\rho\alpha\gamma z_{\alpha,\beta}}{2}\Phi g(\eb[n]))^T\Phi g(\eb[n])
\end{align}

Finally, we have
\begin{align}
\Delta f=&\frac{\xi^2}{2}(\Lb\xb[n]+\frac{\rho\alpha\gamma z_{\alpha,\beta}}{2}\Phi g(\eb[n]))^T\Lb(\Lb\xb[n]+\frac{\rho\alpha\gamma z_{\alpha,\beta}}{2}\Phi g(\eb[n])) \nonumber\\
&-\xi(\Lb \xb[n]+\frac{\rho\alpha\gamma z_{\alpha,\beta}}{2}\Phi g(\eb[n]))^T\Lb\xb[n] \nonumber\\
&-\frac{\xi\alpha\rho\gamma z_{\alpha,\beta}}{2}(\Lb\xb[n]+\frac{\rho\alpha\gamma z_{\alpha,\beta}}{2}\Phi g(\eb[n]))^T\Phi g(\eb[n])\nonumber\\
=&\frac{\xi^2}{2}(\Lb\xb[n]+\frac{\rho\alpha\gamma z_{\alpha,\beta}}{2}\Phi g(\eb[n]))^T\Lb(\Lb\xb[n]+\frac{\rho\alpha\gamma z_{\alpha,\beta}}{2}\Phi g(\eb[n]))\nonumber\\
&-\xi(\Lb \xb[n]+\frac{\rho\alpha\gamma z_{\alpha,\beta}}{2}\Phi g(\eb[n]))^T(\Lb\xb[n]+\frac{\rho\alpha\gamma z_{\alpha,\beta}}{2}\Phi g(\eb[n]))
\end{align}

As mentioned above, the condition of uniform stability is that $\Delta f$ always be negative, which happens if we have
\begin{displaymath}
0<\xi<\frac{2\vb^T\vb}{\vb^T\Lb\vb},
\end{displaymath}
where $\vb\triangleq\Lb\xb[n]+\frac{\rho\alpha\gamma z_{\alpha,\beta}}{2}\Phi g(\eb[n])$. Therefore, using the Rayleigh quotient theorem, the sufficient condition for uniform stability is
\begin{equation}
0<\xi<\frac{2}{\lambda_{max}},
\end{equation}
where $\lambda_{max}$ is the maximum eigenvalue of the matrix $\Lb$.

\subsection{Mean performance}
In this subsection, we study the steady-state mean performance of the proposed algorithm. According to equation (\ref{eq_LMS}), we have
\begin{equation}
\mathbb{E}[\xb[n+1]]=(\Ib-\xi\Lb)\mathbb{E}[\xb[n]]-\xi'\Phi \mathbb{E}[g(\eb[n])]
\end{equation}
where $\xi'=\frac{\xi\rho\alpha\gamma z_{\alpha,\beta}}{2}$, and $g(\eb[n])$ is defined by (\ref{eq_ge1}). When the kernel parameter $\rho$ is small enough, we have
\begin{align}
g(\eb[n])\cong&\sum_{i=1}^N{\big((1-\rho|e_i[n]|^\alpha)|e_i[n]|^{\alpha-1}\mathrm{sign}(e_i[n])\hb_i\big)}\nonumber\\
=&\sum_{i=1}^N{\left(|e_i[n]|^{\alpha-1}\mathrm{sign}(e_i[n])\hb_i\right)}\nonumber\\
&-\sum_{i=1}^N{\left(\rho|e_i[n]|^{2\alpha-1}\mathrm{sign}(e_i[n])\hb_i\right)}
\end{align}

Thus, we have
\begin{align}
\mathbb{E}\left[g(\eb[n])\right]=&\mathbb{E}\left[\sum_{i=1}^N{\left(|e_i[n]|^{\alpha-1}\mathrm{sign}(e_i[n])\hb_i\right)}\right]\nonumber\\
&-\mathbb{E}\left[\sum_{i=1}^N{\left(\rho|e_i[n]|^{2\alpha-1}\mathrm{sign}(e_i[n])\hb_i\right)}\right]
\end{align}

By defining $\mathbb{E}[\sum_{i=1}^N{(\rho|e_i[n]|^m \mathrm{sign}(e_i[n])\hb_i)}]=\Ub_{n,m}$, the above equation can be rewritten as
\begin{equation}
\mathbb{E}[g(\eb[n])]=\Ub_{n,\alpha-1}-\rho\Ub_{n,2\alpha-1}
\end{equation}
and, thus
\begin{equation}
\mathbb{E}[\xb[n+1]]=(\Ib-\xi\Lb)\mathbb{E}[\xb[n]]-\xi'\Phi(\Ub_{n,\alpha-1}-\rho\Ub_{n,2\alpha-1})
\end{equation}

Let $\mathbb{E}[\xb[n]]=\Pb_n$, thus we have
\begin{equation}
\Pb_{n+1}=(\Ib-\xi\Lb)\Pb_n-\xi'\Phi(\Ub_{n,\alpha-1}-\rho\Ub_{n,2\alpha-1})=\Bb\Pb_n+\rb_n
\end{equation}
where $\Bb=\Ib-\xi\Lb$, and $\rb_n=-\xi'\Phi(\Ub_{n,\alpha-1}-\rho\Ub_{n,2\alpha-1})$. As can be seen, the above equation is a recursive equation, and we can write
\begin{equation}
\mathbb{E}[\xb[n]]=\Pb_n=\Bb^n\Pb_0+\Bb^{n-1}\rb_0+\Bb^{n-2}\rb_1+...+\Bb\rb_{n-2}+\rb_{n-1}
\end{equation}
\begin{equation}
\mathbb{E}[\xb[n+1]]=\Pb_{n+1}=\Bb^{n+1}\Pb_0+\Bb^n\rb_0+\Bb^{n-1}\rb_1+...+\Bb\rb_{n-1}+\rb_{n}
\end{equation}

As the algorithm reaches the steady-state, we have $\mathbb{E}[\xb[n]]\to\mathbb{E}[\xb[n+1]]$, which implies that
\begin{align}
&\Pb_{n+1}-\Pb_n\to 0\nonumber\\
&\implies(\Bb-I)(\Bb^n\Pb_0+\Bb^{n-1}\rb_0+\Bb^{n-2}\rb_1+...+\Bb\rb_{n-2}+\rb_{n-1})+\rb_n\to 0\nonumber\\
\end{align}
that yields
\begin{equation}
-\xi\Lb(\Bb^n\Pb_0+\Bb^{n-1}\rb_0+\Bb^{n-2}\rb_1+...+\Bb\rb_{n-2}+\rb_{n-1})+\rb_n\to 0
\end{equation}

Based on the previous assumption that the kernel parameter $\rho$ is small enough, we have $\Ub_{n,m}\to 0,\forall m$, and, then, $\rb_n \to 0$, thus, we can write
\begin{equation}
\Pb_{n+1}-\Pb_n\to-\xi\Lb(\Bb^n\Pb_0)
\end{equation}
which tend to zero if $\Bb^n=(\Ib-\xi\Lb)^n=0$, i.e.,
\begin{displaymath}
0<1-\xi\lambda_i<1,\forall i=1:N
\end{displaymath}
which yields
\begin{equation}
0<\xi<\frac{1}{\lambda_{max}}
\end{equation}
where where $\lambda_i$ and $\lambda_{max}$ are the eigenvalues and the maximum eigenvalue of the matrix $\Lb$, respectively.

\subsection{Computational complexity}
In this subsection, we provide a complexity analysis to the reconstruction process of the graph signal $\xb$ from sampled, noisy observations $\yb[n]$, where we have sampled and measured the signals of $M$ nodes (see \eqref{eq: model}).

Each iteration of the proposed GC-GSR algorithm for estimation the graph signal via (\ref{eq_LMS}) involves matrix-vector product and vector-vector summation, which consumes $\mathcal{O}(N^2)$ amount of computational complexity. Likewise, the LMS algorithm provided in \cite{Loren16} and the LMP algorithm in \cite{Nguy20} have $\mathcal{O}(M|\mathcal{F}|+N)$ order of complexity, where $|\mathcal{F}|$ denotes the number of non-zero elements in the Fourier transform of the signal $\xb$.

Table \ref{Table_1}, presents the detailed computational complexity comparison among the GC-GSR, LMS and LMP algorithms per iteration in terms of norm and arithmetic operations. It can be seen that the LMS algorithm, which benefits from the sparsity of the Fourier transform of the graph signal, has smaller computational complexity than the other algorithms. Moreover, the complexity of GC-GSR and LMP algorithms is a slightly larger than the LMS algorithm due to the inclusion of the smoothness term and the computational cost of the $\ell_p$-norm, respectively.
\begin{table*}[!t]
\vspace{-0.2cm}
\tiny
\centering
\caption{Computational Complexity of several iterative methods for graph signal recovery. }
\begin{tabular}{l||p{2cm}|p{1.6cm}|p{2.4cm}} \hline
& & &   \\[-1.5ex]
 $Algorithm$ & Multiplications & Additions & $\ell_p$ norm computation \\ \hline
  $GC-GSR$ & $N(M+N)+N$ & $2N$ & 0
 \\
  $LMS$ & $M|\mathcal{F}|+N $ & $2N$ & 0 \\
  $LMP$ & $M|\mathcal{F}|+N$ & $2N$  & $N$ \\ \hline
\end{tabular}
\label{Table_1}
\end{table*}

\section{Simulation Results}
\label{sec: simulation}
\subsection{Setup}
In this section, the performance of the proposed GC-GSR algorithm in adaptive graph signal recovery is investigated. In the first experiment, the effect of the parameters $\alpha$ and $\beta$ on the performance is examined, in which a synthetic graph signal is used. Secondly, to verify the robustness of the proposed algorithm, the performance of the proposed algorithm in presence of alpha-stable impulsive noise and GGD impulsive noise in comparison to some of the competing algorithms is evaluated. In the second experiment, a synthetic graph signal and a real-world graph signal of temperature data are used. The parameter setting of the noises is based on the following formula for GGD noise
\begin{equation}
\label{eq:GGD_noise}
v_i\sim\mathrm{GGD}(\nu,\eta)=\frac{\nu}{2\eta\Gamma(1/\nu)}\exp(-|\frac{v_i}{\eta}|^\nu),
\end{equation}
and the following formula for the characteristic function of a symmetric alpha-stable noise
\begin{equation}
\label{eq:alpha_noise}
\varphi(t)=\exp(j\mu t-\tau|t|^p),
\end{equation}
where $0<p<2$ is the stability parameter corresponding to the impulsiveness of the distribution, $\mu$ is the location parameter which equals the mean parameter for $1<p<2$ and the median parameter for $0<p<1$, and $\tau>0$ is the scale parameter.

\subsection{Synthetic Graphs and Signals}
For the synthetic graph signal, a similar graph used in \cite{Ioan19} is considered. The following ``seed matrix'' $\Pb_{0}$ is used similar to \cite{Ioan19} and \cite{Tork21}, which is:
\begin{displaymath}
\Pb_{0}=
\begin{bmatrix}
0.6 & 0.1 & 0.7 \\
0.3 & 0.1 & 0.5\\
0 & 1 & 0.1
\end{bmatrix}
\end{displaymath}
By defining $\Pb = \Pb_{0}\otimes \Pb_{0}\otimes \Pb_{0}\otimes \Pb_{0}$, where $\otimes$ denotes Kronecker product, a network with $N=81$ nodes are generated. Then, the weighted adjacency matrix $\Wb$ as $W_{ij}\sim \mathrm{Bernoulli}(P_{ij})$,  $\forall i,j$ can be considered. The bandlimited graph signals are generated by the model $\xb=\sum_{i=1}^{\omega} \gamma^{(i)} \cb^{(i)}$, where $\gamma^{(i)}\sim\cal{N}\mathrm{(0,1)}$, $\omega$ denote the bandwidth parameter, and $\{\cb^{(i)}\}^{\omega}_{i=1}$ are the eigenvectors related to the $\omega$ smallest eigenvalues of the Laplacian matrix. In the synthetic experiments, we set the bandwidth of the signal as $\omega=25$.

\begin{table*}[!t]
\vspace{-0.2cm}
\tiny
\centering
\caption{Performance of the proposed algorithms for learning kernel width in terms of NMSD (dB). }
\begin{tabular}{l||p{1cm}|p{1cm}|p{1cm}|p{1cm}|p{1cm}|p{1cm}|p{1cm}|p{1cm}|p{1cm}} \hline
& & & & &   \\[-1.5ex]
 $M$ & learned $\beta$ & $\beta=0.2$ & $\beta=1$ & $\beta=2$ & $\beta=5$ & $\beta=10$ & $\beta=15$ & $\beta=20$ & $\beta=30$\\ \hline \hline
  $30$ & $\bold{-17.2326}$ & -4.2489 & -4.7018  & -8.0131 & -10.7271  & -12.3560 & -12.1511 & -12.3560 & -10.0215
 \\
  $60$ & $\bold{-20.3081}$ & -8.5826 & -10.5900 & -12.7213 & -14.4098  & -15.0207  & -14.8912 & -14.7608 & -11.1601 \\
  $70$ & $\bold{-24.8029}$ & -12.3893 & -13.8726 & -15.1923 & -17.8230  & -18.0218  & -18.9023 & -18.5165 & -16.0290 \\
  $81$ & $\bold{-46.1243}$ & -17.9091 & -20.6512  & -23.7652 & -25.1345  & -31.5044 & -37.1109 & -37.3487 & -30.9173 \\ \hline
\end{tabular}
\label{Table_2}
\end{table*}

\begin{table*}[!t]
\vspace{-0.2cm}
\tiny
\centering
\caption{Performance of the proposed algorithms for different values of $\alpha$ in terms of NMSD (dB). }
\begin{tabular}{l||p{1cm}|p{1cm}|p{1cm}|p{1cm}|p{1cm}|p{1cm}|p{1cm}|p{1cm}|p{1cm}} \hline
& & & & &   \\[-1.5ex]
 $M$ & & $\alpha=1.05$ & $\alpha=1.1$ & $\alpha=1.15$ & $\alpha=1.3$ & $\alpha=1.5$ & $\alpha=2$ & $\alpha=4$ & $\alpha=8$ \\ \hline \hline
  \multirow{2}{*}{$30$} & learned $\beta$ & -10.8788 & -13.0450  &  -15.1080 & $\bold{-16.2118}$ & -15.2286 & -13.8890 & -11.0293 & -8.9711
 \\
 & fixed $\beta$ & -6.1768 & -7.1931  &  -10.2089 & $\bold{-12.2253}$ & -11.2277 & -9.9687 & -6.0304 & -4.5068 \\ \hline
  \multirow{2}{*}{$60$} & learned $\beta$ & -14.9274 & -16.7430 & -18.3231 & $\bold{-21.0109}$  & -20.1319  & -17.5127 & -14.0401 & -11.0435 \\
  & fixed $\beta$ & -9.6895 & -11.7841  &  -13.9161 & $\bold{-15.0284}$ & -14.3397 & -12.4285 & -9.0739 & -7.4073 \\ \hline
  \multirow{2}{*}{$81$} & learned $\beta$ & -30.2318 & -36.9910  & -40.0173 & $\bold{-46.7423}$  & -42.4150 & -34.9391 & -28.1069 & -20.7145 \\
  & fixed $\beta$ & -19.5318 & -25.7457  &  -30.0613 & $\bold{-38.0538}$ & -36.2876 & -29.6716 & -22.1926 & -418.8106 \\ \hline
\end{tabular}
\label{Table_3}
\end{table*}

The Normalized Mean-Square Deviation (NMSD) is used as the performance metric for evaluation of the GSR algorithms which is defined as
\begin{displaymath}
\mathrm{NMSD}[n]=\frac{||\xb^o-\xb[n]||^2}{||\xb^o||^2}
\end{displaymath}
The results are averaged on 100 independent Monte Carlo runs of the algorithm.

In the first experiment, we used a synthetic graph signal generated as stated above. For the impulsive noise, a GGD impulsive noise with parameters $\nu=1.3$ and $\eta=0.1$ is added to the synthetic graph signal to achieve $\mathrm{SNR}\sim 25\mathrm{dB}$. The $M$ samples of the graph with the total number of $N=81$ nodes are observed. Then, the proposed adaptive GSR algorithm is applied to the sampled noisy graph signal to reconstruct the graph signal.

\begin{figure*}[!t]
%\vspace{-1cm}
\centering
\subfloat[$M=50$]{\includegraphics[width=1.6in]{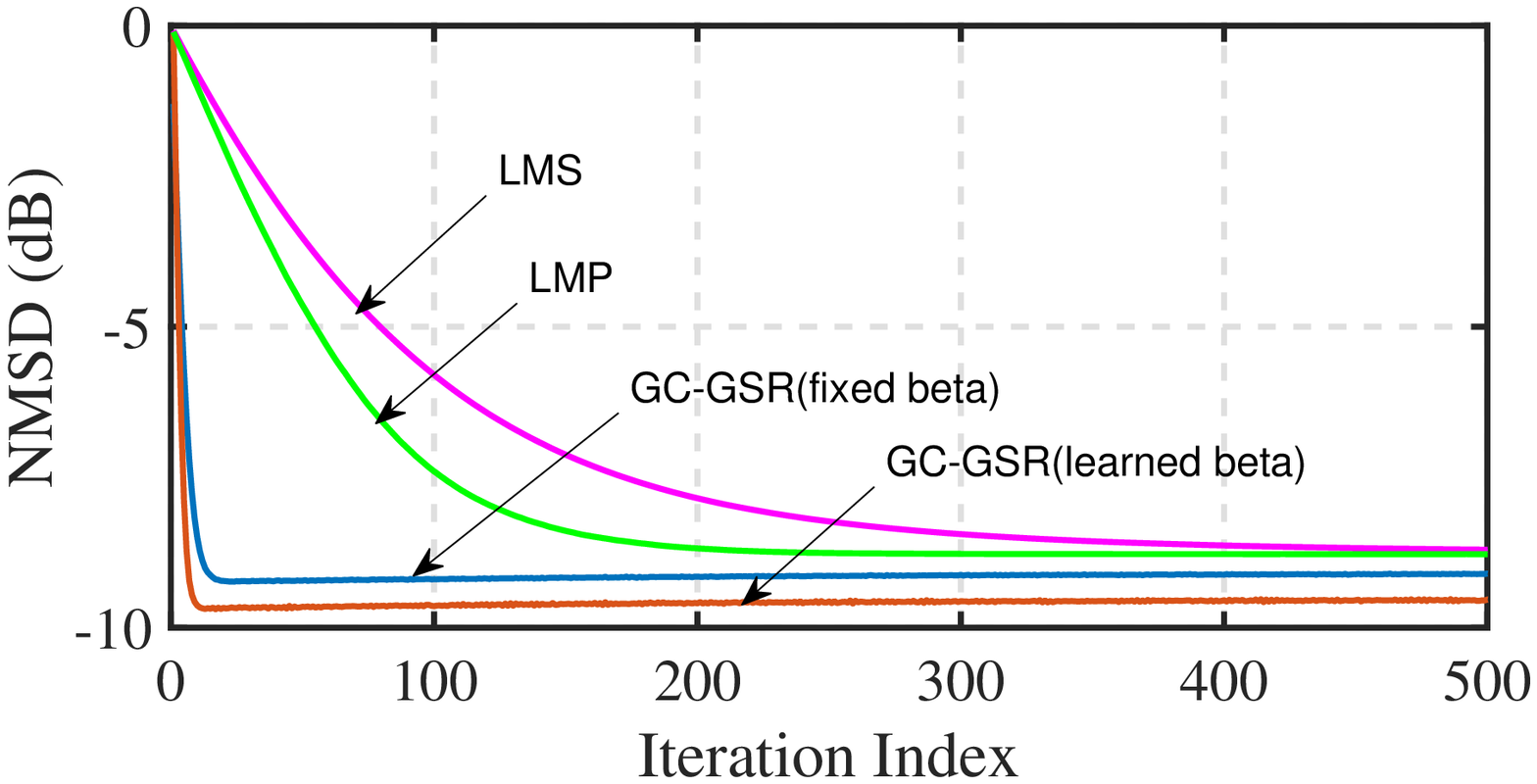}%
\label{fig1:a}}
\subfloat[$M=70$]{\includegraphics[width=1.6in]{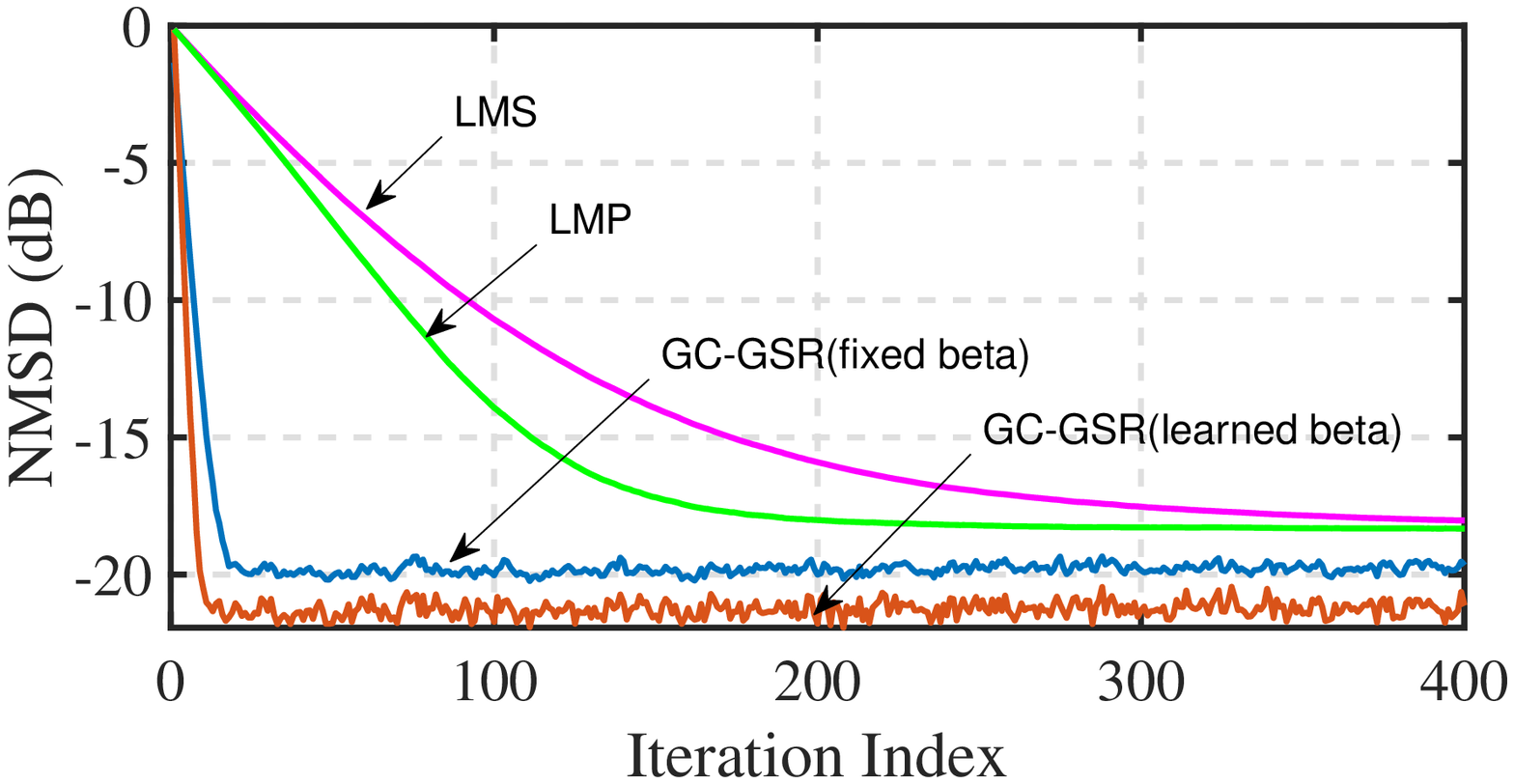}%
\label{fig1:b}}
\subfloat[$M=81$]{\includegraphics[width=1.6in]{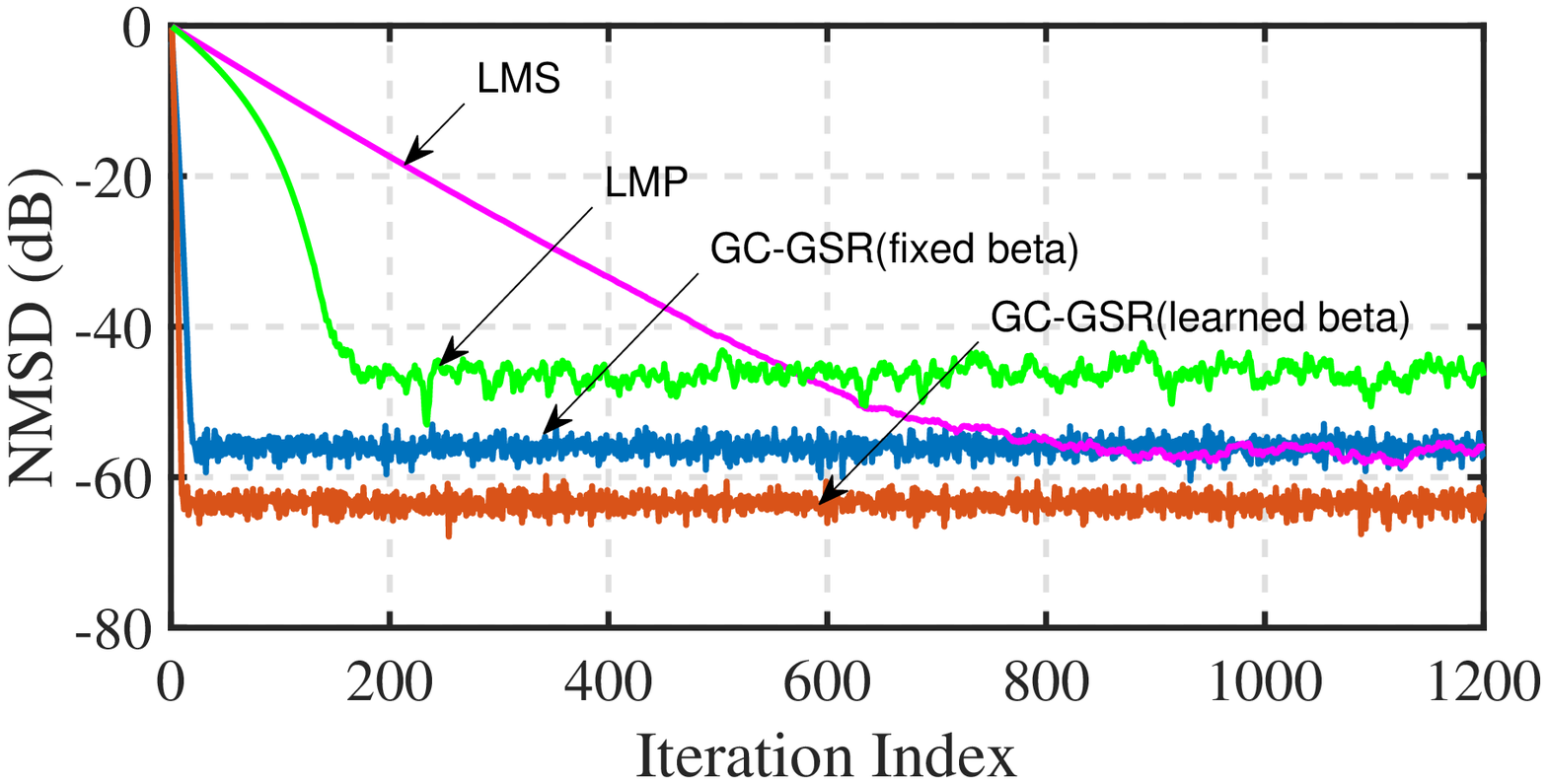}%
\label{fig1:c}}
\caption{Performance of algorithms for synthetic data in GGD noise.}
\label{fig1}
\end{figure*}

\begin{figure*}[!t]
%\vspace{-1cm}
\centering
\subfloat[$M=50$]{\includegraphics[width=1.6in]{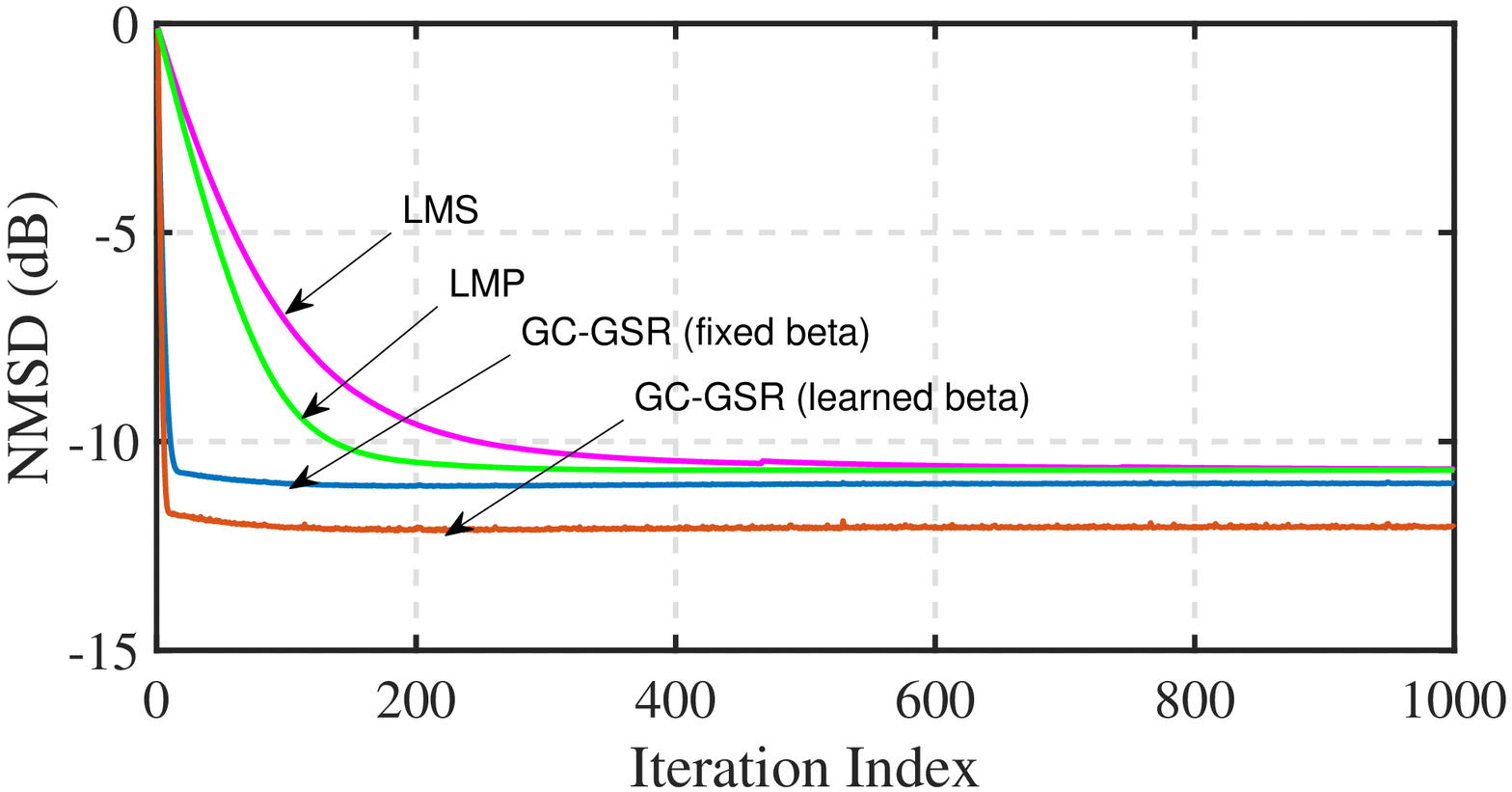}%
\label{fig2:a}}
\subfloat[$M=70$]{\includegraphics[width=1.6in]{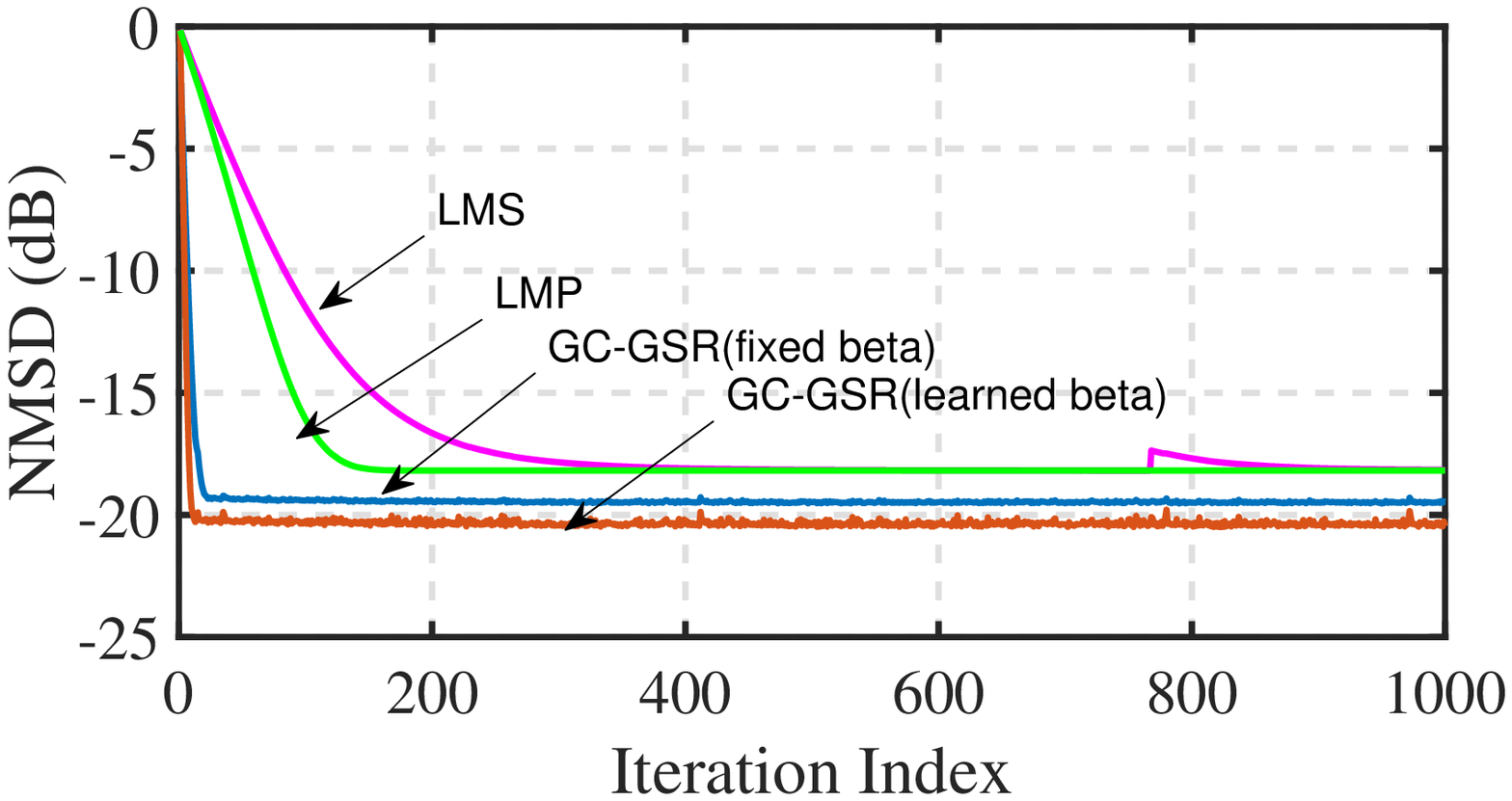}%
\label{fig2:b}}
\subfloat[$M=81$]{\includegraphics[width=1.6in]{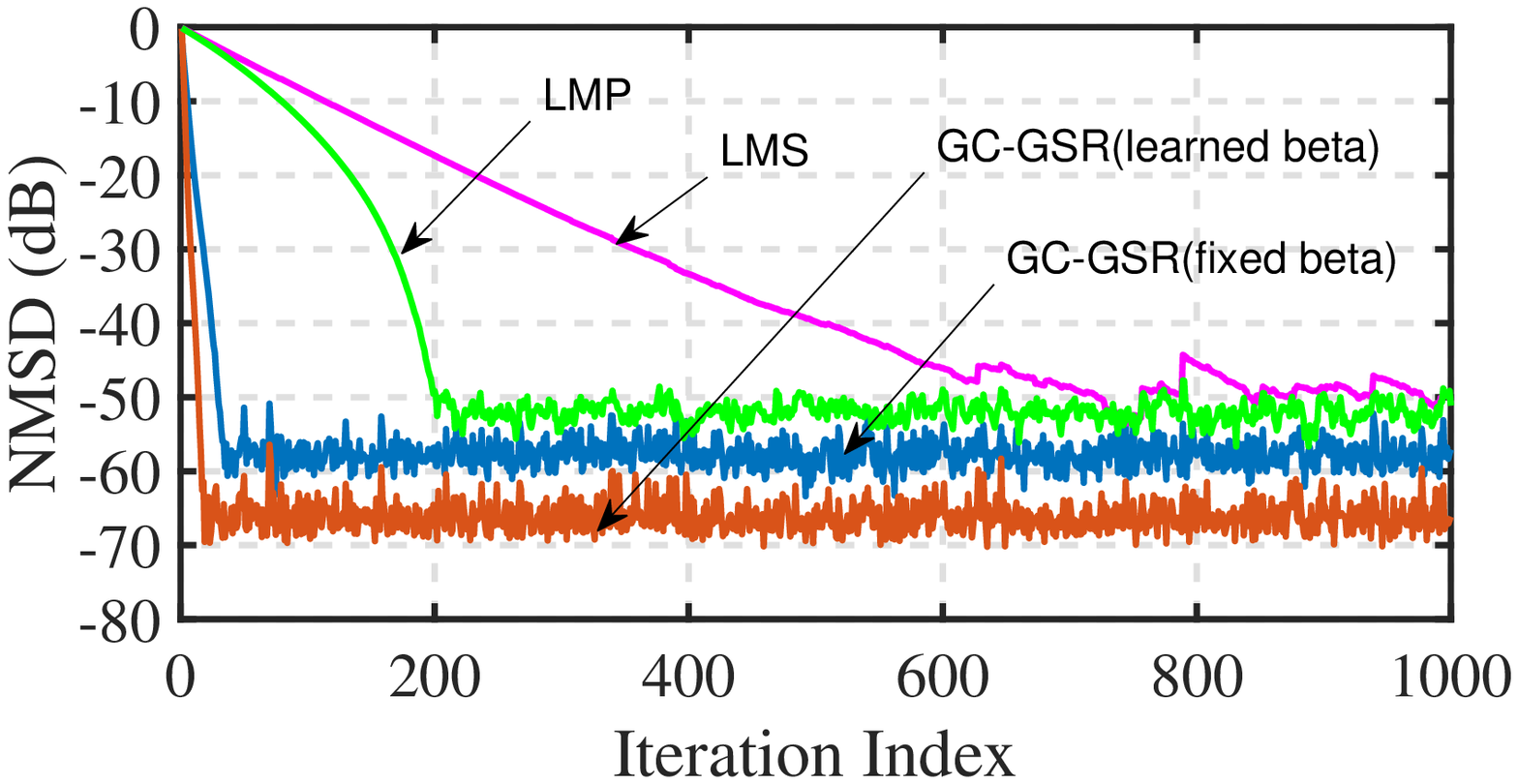}%
\label{fig2:c}}
\caption{Performance of algorithms for synthetic data in alpha-stable noise.}
\label{fig2}
\end{figure*}

\subsubsection{Performance of the algorithm with learning $\beta$}
In the first scenario, we evaluate the efficiency of the algorithm proposed for the kernel width estimation. To this end, we calculate the NMSD of the proposed algorithm for four values of $M(=30, 60, 70, 81)$, and in two conditions: i) when the parameter $\beta$ is learned using the algorithm presented in Section \ref{sec:Kernel_est}, and, ii) when we set a fix value for the parameter $\beta$. The results are presented in Table \ref{Table_2}. It can be seen that exploiting an algorithm for learning the kernel width parameter, which estimates different kernel widths for each element of graph signal based on an appropriate prior assumption for $\beta$, significantly reduces the respective reconstruction error of the graph signal.

\subsubsection{Effect of parameter $\alpha$}
The effect of parameter $\alpha$ (exponent) is explored in Table \ref{Table_3}, in which the NMSD of the proposed GC-GSR algorithm with different values of exponent $\alpha$, and for three values of $M(=30, 60, 81)$ is reported. It shows that regardless of the value of $\alpha$, the proposed algorithm reconstructs the graph signal with acceptable relative error.

\begin{figure*}[!t]
%\vspace{-1cm}
\centering
\subfloat[$SNR=10 dB$]{\includegraphics[width=1.6in]{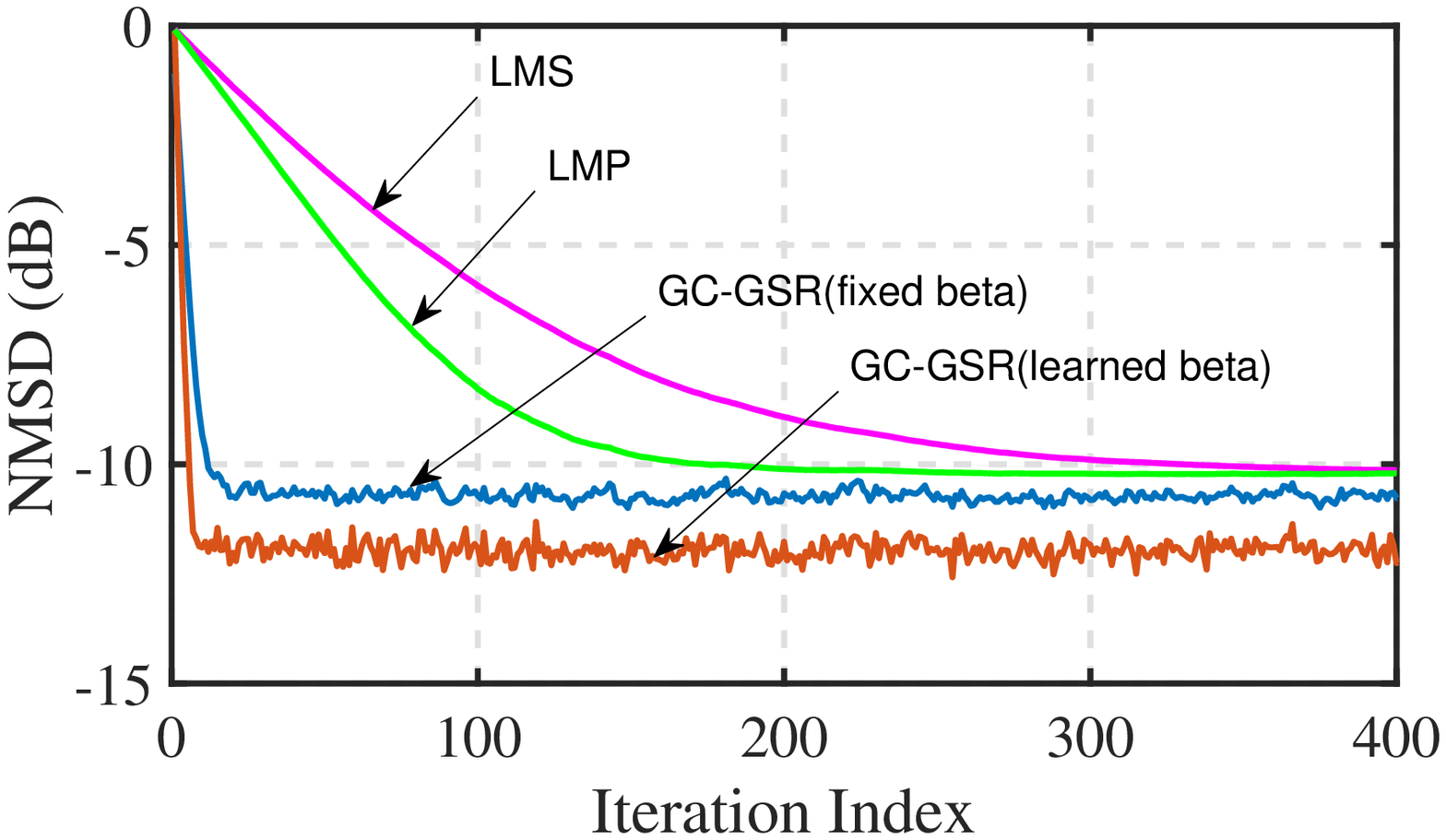}%
\label{fig3:a}}
\subfloat[$SNR=20 dB$]{\includegraphics[width=1.6in]{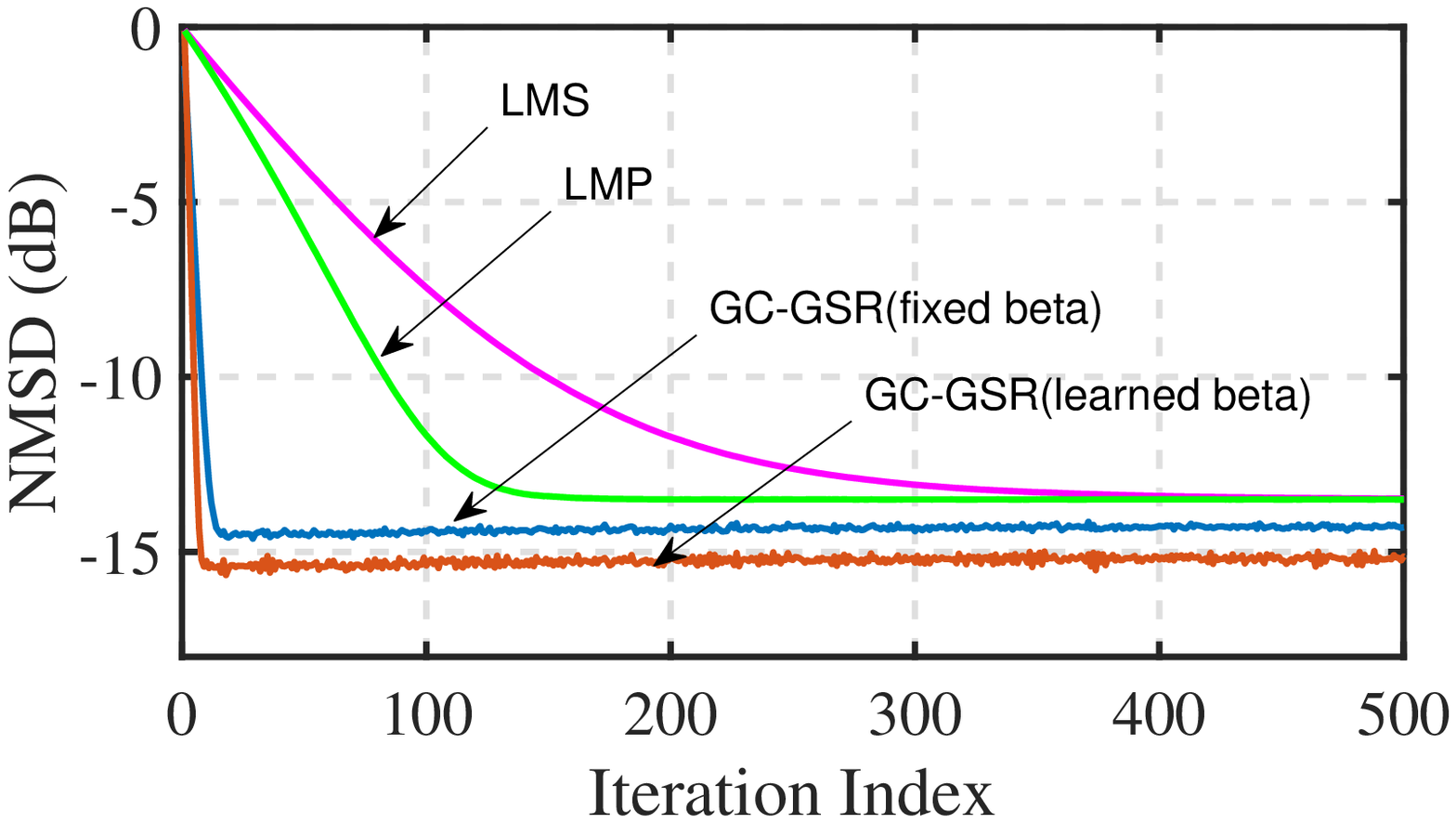}%
\label{fig3:b}}
\subfloat[$SNR=30 dB$]{\includegraphics[width=1.6in]{GGD_70.eps}%
\label{fig3:c}}
\caption{Performance of algorithms for synthetic data in GGD noise.}
\label{fig3}
\end{figure*}

\begin{figure*}[!t]
%\vspace{-1cm}
\centering
\subfloat[$\tau=0.05$]{\includegraphics[width=1.6in]{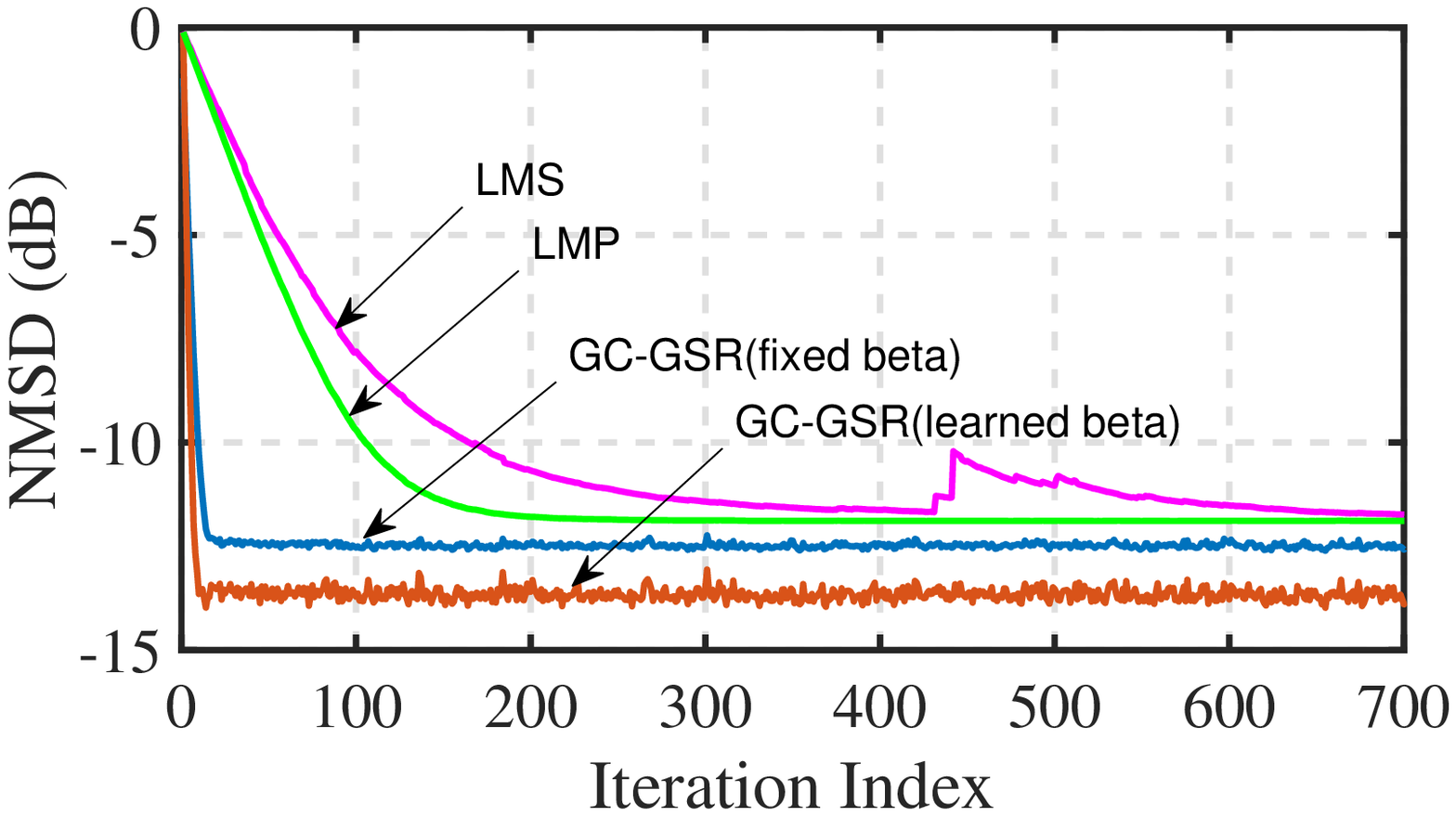}%
\label{fig4:a}}
\subfloat[$\tau=0.005$]{\includegraphics[width=1.6in]{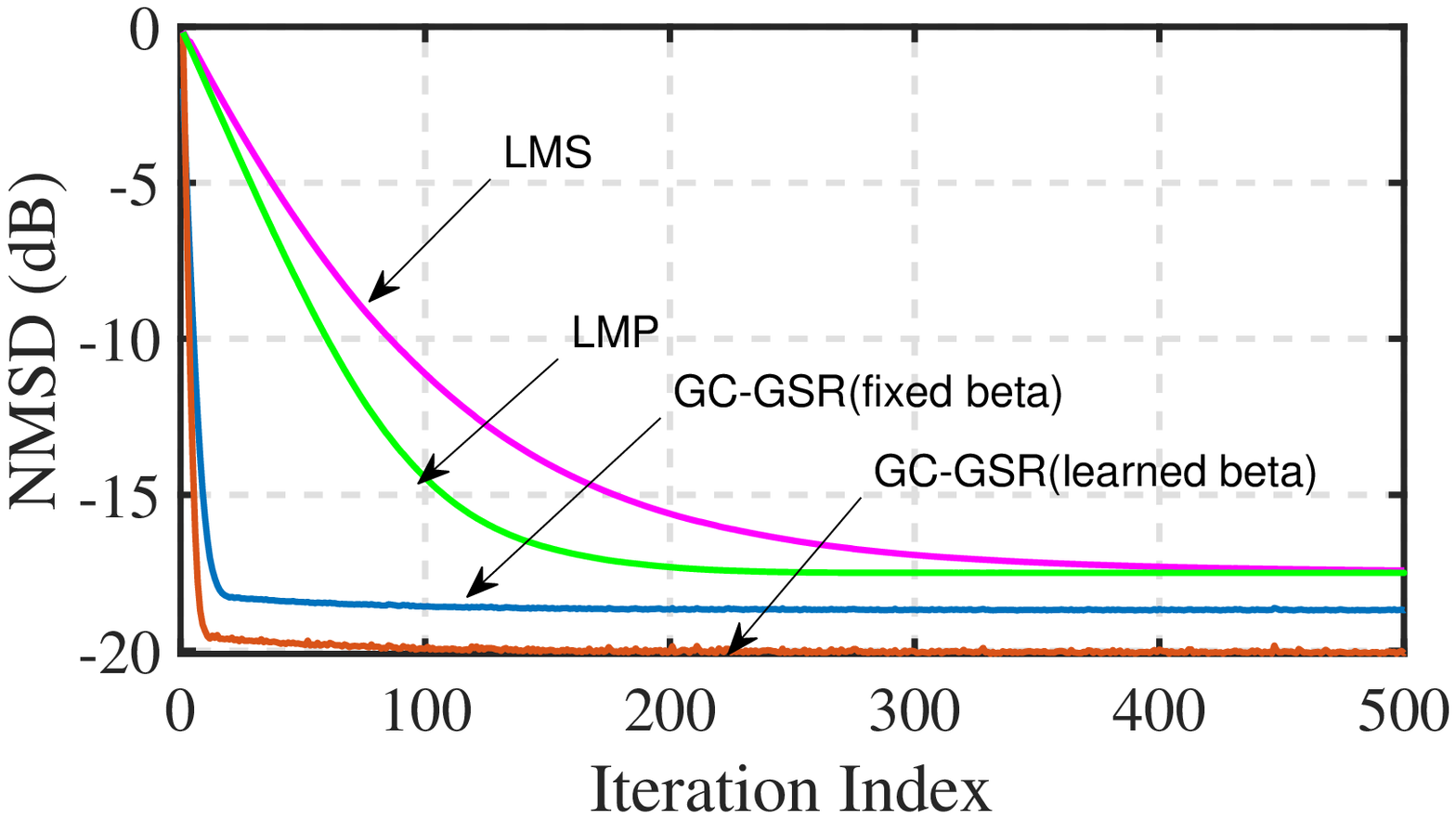}%
\label{fig4:b}}
\caption{Performance of algorithms for synthetic data in alpha-stable noise.}
\label{fig4}
\end{figure*}

\begin{figure*}[!t]
%\vspace{-1cm}
\centering
\subfloat[$M=30$]{\includegraphics[width=1.6in]{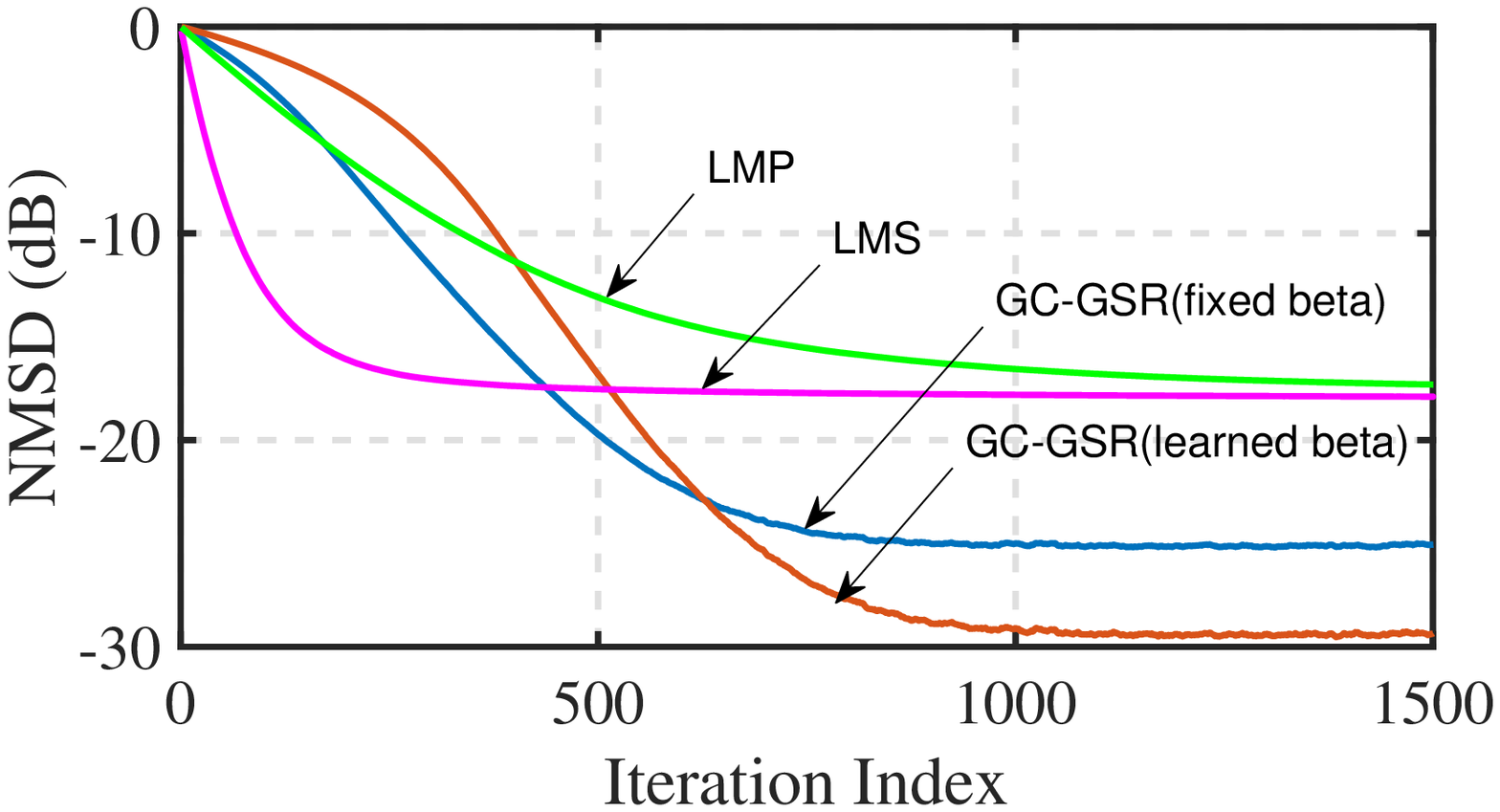}%
\label{fig5:a}}
\subfloat[$M=40$]{\includegraphics[width=1.6in]{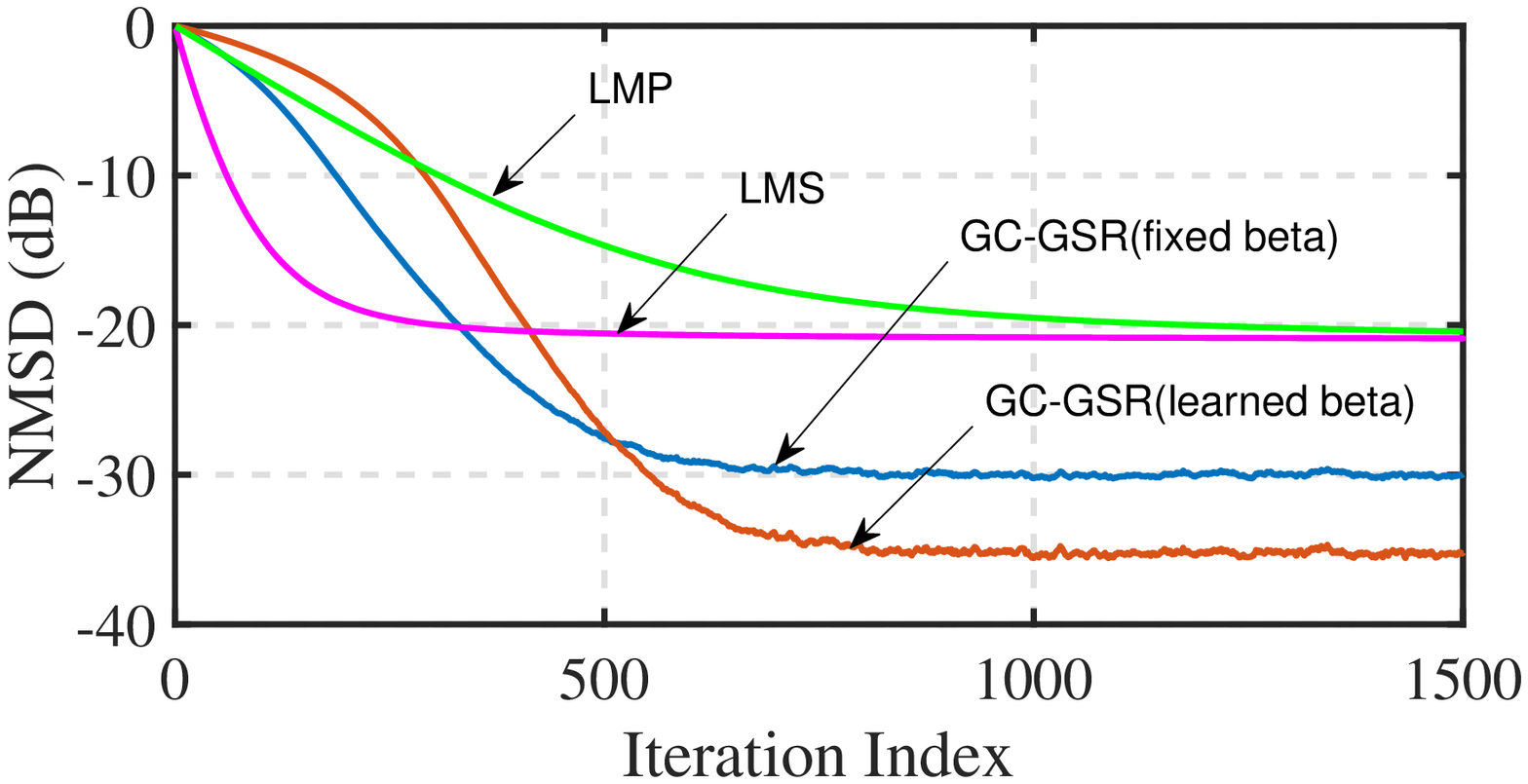}%
\label{fig5:b}}
\subfloat[$M=54$]{\includegraphics[width=1.6in]{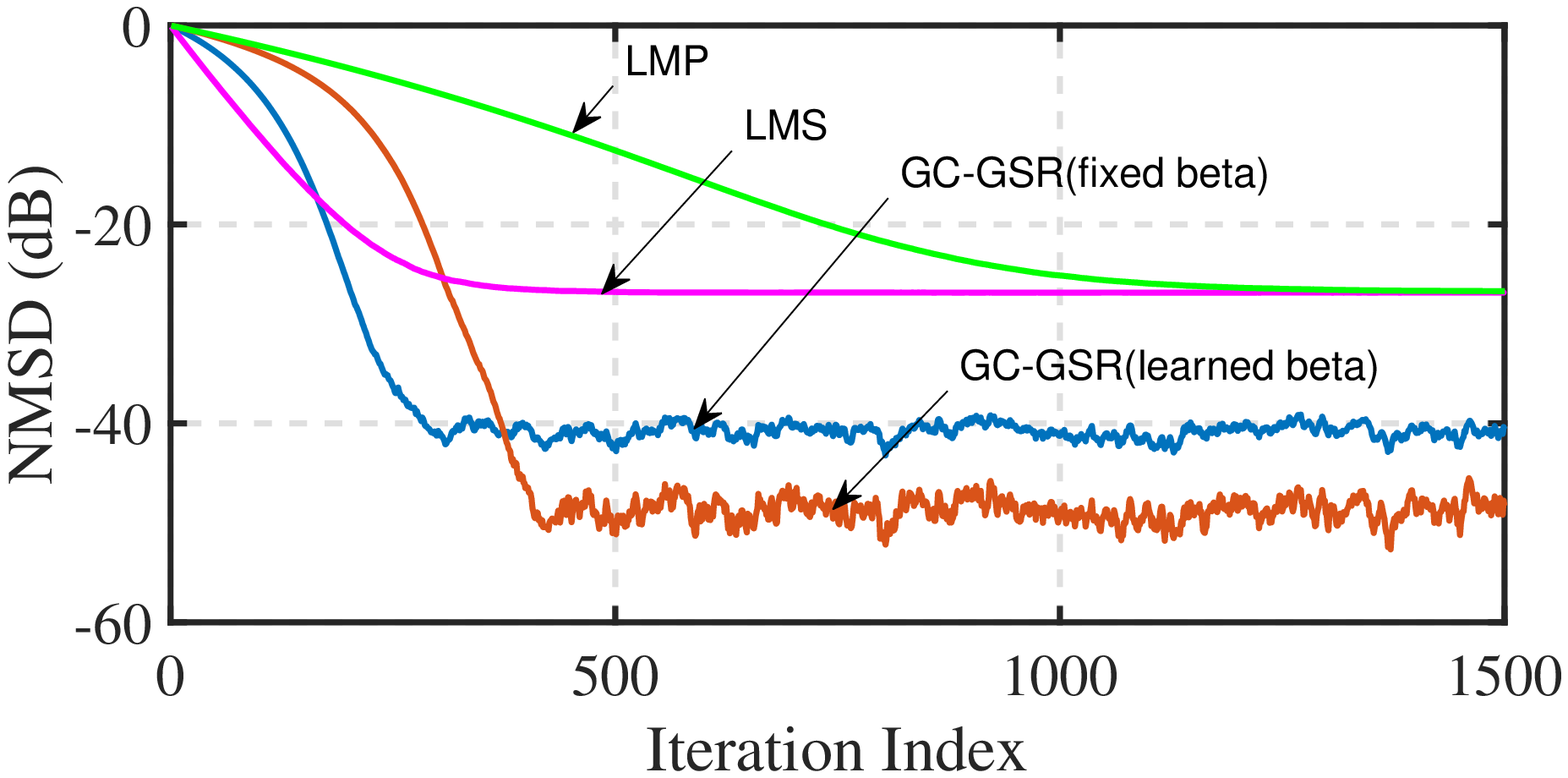}%
\label{fig5:c}}
\caption{Performance of algorithms for temperature data in GGD noise.}
\label{fig5}
\end{figure*}

\subsubsection{Comparison of recovery performance}
\begin{enumerate}[A]

\item \textsf{The reconstruction error with respect to sampling rate:} \newline
%The reconstruction error with respect to sampling rate: \\
In this experiment, the performance of the proposed GC-GSR is compared to other adaptive GSR algorithms which are LMS \cite{Loren16} and LMP \cite{Nguy20} for the case of synthetic graph signal as stated before, and in presence of impulsive noise. In this scenario, we considered two impulsive noises. The first impulsive noise is alpha-stable impulsive noise with parameters $\mu=0$, $p=1.3$ and $\tau=0.005$ (for further explanation of alpha-stable noise refer to \cite{Nguy20} and \cite{Zayy16}). In the experiments, the step-sizes are chosen such that all competing algorithms have almost the same initial convergence rate. The results of final NMSD versus the iteration index for three different values of graph samples ($M=50, 70, 81$) is depicted in figure \ref{fig1}. It shows that the proposed algorithm, which maximizes the generalized correntropy between the observations and the original graph signal has better result than LMS and LMP. In addition, although the LMP is designed for the alpha-stable noise cases, but it has poorer performance than the proposed GC-GSR algorithm. The reason is that the proposed algorithm, in addition to maximizing the correntropy criterion, uses GGD distribution as the kernel function, which is a good estimate for the alpha-stable distribution, and, thus, it can adopt to alpha-stable noise conditions.

The second impulsive noise is GGD impulsive noise with parameters $\eta=0.2$ and $\nu=1.3$. The result also are shown in figure \ref{fig2}. Again, the proposed method is the best method in terms of final NMSD. As can be seen from the experiments, for both cases of alpha-stable and GGD noises, the proposed algorithm outperforms the other methods, which shows the robustness of the proposed algorithm.

\item \textsf{The reconstruction error with respect to measurement noise} \newline
%The reconstruction error with respect to measurement noise \\
In this subsection, we demonstrate the performance of all the algorithms under different noise level. In this scenario, the number of sampled nodes is $M=70$.

First, we consider the observation noise to be GGD noise with $\nu=1.3$, and different noise levels result from selection of different values for $\eta$ in (\ref{eq:GGD_noise}). It can be seen from figure \ref{fig3} that the performance of all the algorithms improves with the increase of input SNR (decrease of $\eta$), and the proposed GC-GSR algorithm is always the best one.

Secondly, we assumed alpha-stable measurement noise with $p=1.3$ and $\mu=0$, and different noise levels result from different values of $\tau$ in (\ref{eq:alpha_noise}). Figure \ref{fig4} shows the NMSE performance of different algorithms for various values of input SNR.  It is observed that, in the presence of impulsive alpha-stable noise, the proposed GC-GSR algorithm, which was developed based on the maximum correntropy criterion, significantly outperforms the other algorithms.

Similar to the previous section, the experiments in this section verifies the robustness of the proposed algorithm. Moreover, it can be seen that the proposed algorithm adopts properly to the noise conditions.
\end{enumerate}

\subsection{Temperature Signal}
In the second experiment, we used the real-world temperature graph signal. In this case, the adaptive algorithms are applied to temperature estimation problem. We use the dataset downloaded from the Intel Berkeley Research lab (refer to \cite{Intel04}, \cite{Tork21}, and \cite{Tork22}), and the aim is to recover the unknown temperature values in a set of temperature values acquired from $N=54$ sensors. Similar to the first case, two impulsive noises are examined which are alpha-stable and GGD. The parameters $\mu$, $p$, and $\eta$ are the same as before. But, the other parameters are selected differently as $\tau=0.05$ and $\beta=0.5$. The NMSD versus the iteration index is depicted in figures \ref{fig5} and \ref{fig6} for three different numbers of selected nodes ($M=30, 40, 54$) and for alpha-stable noise and GGD noise, respectively. The figures show the superiority of the proposed algorithm in comparison to LMS and LMP algorithms.
\begin{figure*}[!t]
%\vspace{-1cm}
\centering
\subfloat[$M=30$]{\includegraphics[width=1.6in]{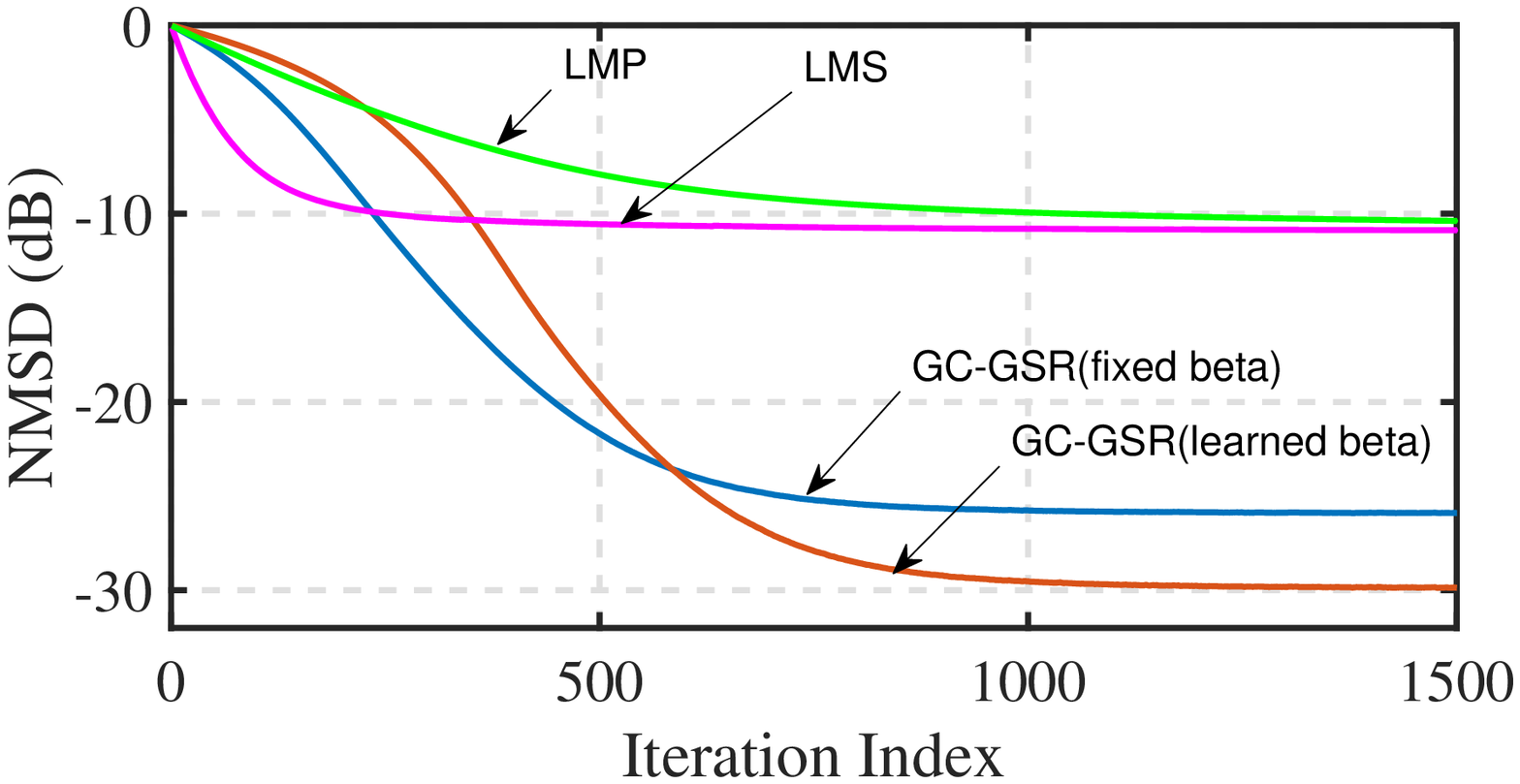}%
\label{fig6:a}}
\subfloat[$M=40$]{\includegraphics[width=1.6in]{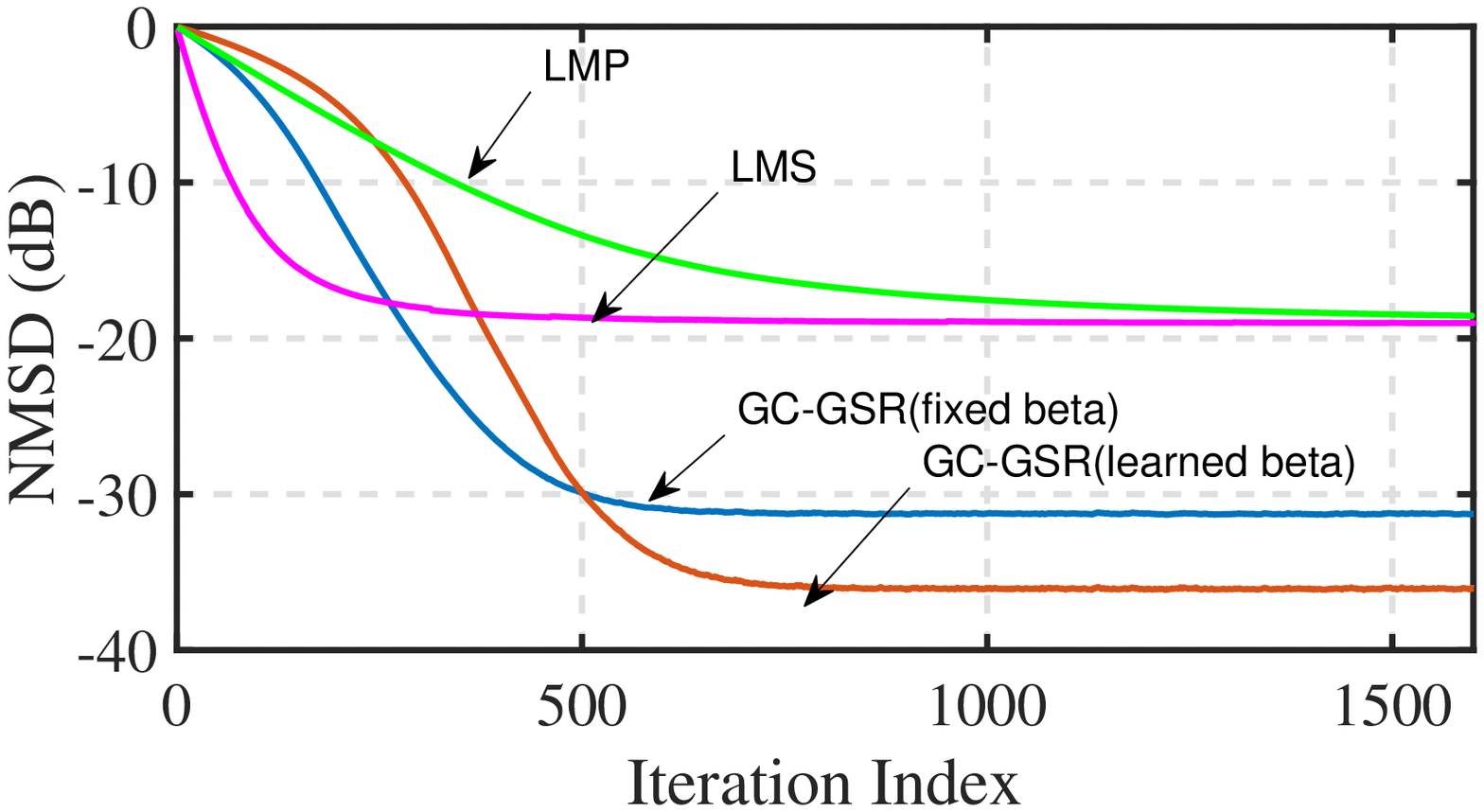}%
\label{fig6:b}}
\subfloat[$M=54$]{\includegraphics[width=1.6in]{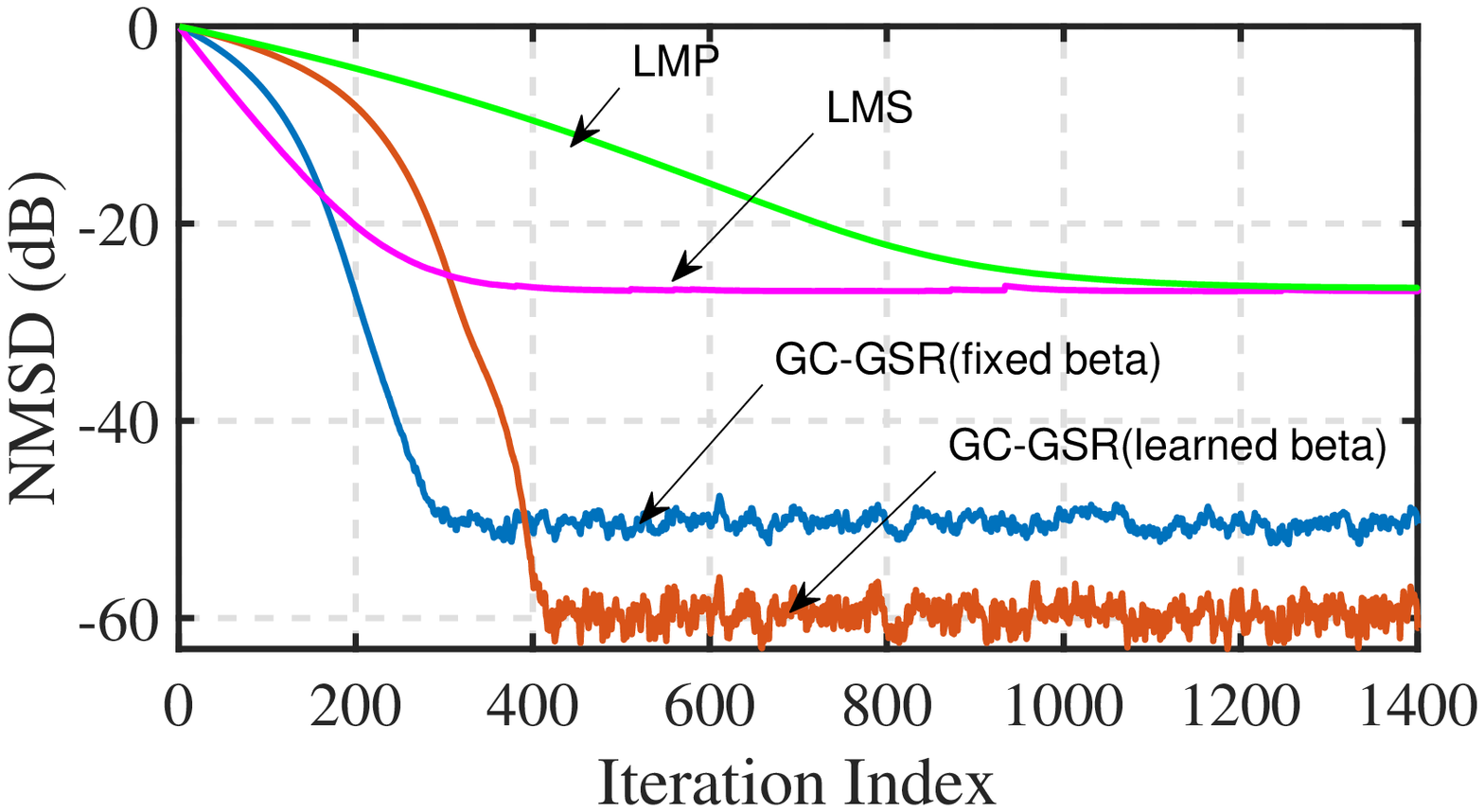}%
\label{fig6:c}}
\caption{Performance of algorithms for temperature data in alpha-stable noise.}
\label{fig6}
\end{figure*}

\section{Conclusion}
\label{sec: conclusion}
In this paper, a general correntropy-based adaptive graph signal recovery algorithm is devised which is suitable for robust smoothed graph signal recovery in the presence of impulsive noise. The proposed adaptive algorithm suggests to use a cost function which is a combination of graph smoothness term and a generalized correntroy term. The generalized correntropy term has two parameter of exponent and kernel width. A kernel width learning mechanism is developed in the algorithm which is based on Bayesian inference of $\beta$. Moreover, a convexity analysis of the cost function, the uniform stability, the mean convergence of the algorithm, and the complexity analysis is added in the section of theoretical analysis. In the simulation results, the benefit of learning $\beta$ is firstly shown. Then, in the experiments presented in the simulation results section, it was deduced that the proposed GC-GSR algorithm has the best result in comparison to other competing algorithms which are LMS and LMP. Moreover, the experiments verify the robustness of the proposed algorithm.

\section{Acknowledgement}
This work was supported by the Iran National Science Foundation (INSF) (grant number
4005022).% <-this %

\section*{References}

\end{document}